\documentclass[copyright,creativecommons]{eptcs}
\usepackage{amsthm}
\usepackage{latex/packages}            

                         {\endgroup]\end{quote}}
\long\def\ignore#1{}

\newtheorem{theorem}{Theorem}
\newtheorem{lemma}{Lemma}[theorem]

\newcounter{substlemmacounter}
\newcommand*{\myfont}{\fontfamily{cmss}\selectfont}

\newcommand*{\ovA}[1]{%
  $\m@th\overline{\mbox{#1}\raisebox{2.3mm}{}}$%
}
\newcommand{\comment}[1]{}

\newcommand{\latestcommit}{fa651cd}

\newcommand{\clink}[3]{\href{https://github.com/LeaTrogni/formalizing-barbed-similarity-for-the-pi-calculus-in-beluga/blob/\latestcommit/code/#1\#L#2-L#3}{\includegraphics[height=0.02\textheight]{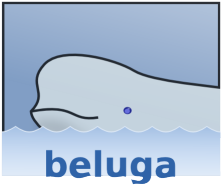}}}
\newcommand{\clinktwo}[1]{\href{https://github.com/LeaTrogni/formalizing-barbed-similarity-for-the-pi-calculus-in-beluga/blob/\latestcommit/code/#1}{\includegraphics[height=0.02\textheight]{latex/beluga-logo.png}}}

\newcommand{\bsim}{\mathbin{\overset{\bullet}{\leq}_B}}
\newcommand{\bpree}{\mathbin{\leq_{\mid s}}}
\newcommand{\bpre}{\mathbin{\leq_B}}   			
\newcommand*{\emacsfont}{\fontfamily{lmtt}\selectfont}

\definecolor{belugapurple}{RGB}{154,0,214}
\definecolor{belugared}{RGB}{225, 0, 0}
\definecolor{belugagreen}{RGB}{31,147,15}
\definecolor{belugablue}{RGB}{0,0,238}
\definecolor{belugapink}{RGB}{237,103,166}
\newcommand\delimone{{\color{belugapurple}\emacsfont\bfseries LF}\color{belugagreen}\aftergroup:}
\newcommand\delimtwo{{\color{belugapurple}\emacsfont\bfseries inductive}\color{belugagreen}\aftergroup:}
\newcommand\delimthree{{\color{belugapurple}\emacsfont\bfseries schema}\color{belugagreen}\aftergroup=}
\newcommand\delimfour{{\color{belugapurple}\emacsfont\bfseries rec}\color{belugablue}\aftergroup:}
\newcommand\delimfive{{/}\slshape\aftergroup/}
\newcommand\delimaux{{\color{belugapink}.}}
\newcommand\delimsix{\color{belugapink}--\aftergroup\delimaux}
\newcommand\delimseven{{\color{belugapurple}\emacsfont\bfseries coinductive}\color{belugagreen}\aftergroup:}

\lstdefinelanguage{Beluga}
{
  morekeywords=[1]{mlam,fn,case,of,total,in,type, impossible, let, and,ctype},
  keywordstyle=[1]\color{belugapurple}\emacsfont\bfseries,
  morecomment=[l]{\%},
  morecomment=[s]{\%\{}{\}\%},
  commentstyle=\color{belugared},
  sensitive=true,
}

\lstdefinestyle{belugastyle}{
 language=Beluga,
    basicstyle=\ttfamily\small,
    columns=flexible,
    keepspaces=true,
    showstringspaces=false,
    breaklines=true,
    breakatwhitespace=true,
    rangeprefix=\%\%\ ,
    rangesuffix=\ \%\%,
    includerangemarker=false,
    moredelim=**[is][\delimone]{LF}{:},
    moredelim=**[is][\delimtwo]{inductive}{:},
    moredelim=**[is][\delimthree]{schema}{=},
    moredelim=**[is][\delimfour]{rec}{:},
    moredelim=**[is][\delimfive]{/}{/},
    moredelim=**[is][\delimsix]{--}{.},
    moredelim=**[is][\delimseven]{coinductive}{:},
    literate={→}{{$\rightarrow$}}{1}
                  {⊢}{{$\vdash$}}{1}
                  {⇒}{{$\Rightarrow$}}{1}
}

\lstdefinestyle{belugastyleframes}{
 language=Beluga,
    basicstyle=\ttfamily\footnotesize,
    columns=flexible,
    keepspaces=true,
    showstringspaces=false,
    breaklines=true,
    breakatwhitespace=true,
    rangeprefix=\%\%\ ,
    rangesuffix=\ \%\%,
    includerangemarker=false,
    moredelim=**[is][\delimone]{LF}{:},
    moredelim=**[is][\delimtwo]{inductive}{:},
    moredelim=**[is][\delimthree]{schema}{=},
    moredelim=**[is][\delimfour]{rec}{:},
    moredelim=**[is][\delimfive]{/}{/},
    moredelim=**[is][\delimsix]{--}{.},
    moredelim=**[is][\delimseven]{coinductive}{:},
    frame=tblr,
    captionpos=b,
    literate={→}{{$\rightarrow$}}{1}
                  {⊢}{{$\vdash$}}{1}
                  {⇒}{{$\Rightarrow$}}{1}
}

\usepackage{tabularx}
\usepackage{array}
\usepackage{booktabs,wrapfig}

\counterwithin{theorem}{section}
\counterwithin{lemma}{section}

\title{Barbed Similarity for the $\pi$-Calculus in Beluga:\\ A Case Study in Coinductive Reasoning
}
\author{
  Lea Trogni
 \institute{Dipartimento di Matematica,\\
  Universit\`{a} degli Studi di Milano, Italy
}
\and
Gabriele Cecilia
\orcidlink{0009-0007-7797-5008}
\institute{School of Computer \& Cyber Sciences,\\
 Augusta University, Augusta, USA}
\and
Alberto Momigliano
\orcidlink{0000-0003-0942-4777}
 \institute{Dipartimento di Informatica,\\
  Universit\`{a} degli Studi di Milano, Italy}
}
\def\titlerunning{Barbed Similarity for the $\pi$-Calculus in Beluga}
\def\authorrunning{Trogni, Cecilia \& Momigliano  }

\begin{document}
\maketitle

\begin{abstract}
We formalize strong barbed similarity for the $\pi$-calculus in the Beluga proof assistant, completing a line of work addressing the Concurrent Calculi Formalization Benchmark. By extending previous developments to include replication, we give a coinductive encoding of behavioral equivalence based on barbs and internal actions. Using Beluga’s copattern-based coinduction, we obtain concise and compositional proofs, including compatibility properties and a context lemma characterizing barbed precongruence. The case study demonstrates the effectiveness of combining HOAS and coinductive reasoning for mechanizing concurrent calculi.
\end{abstract}

\section{Introduction}\label{sec:intro}

One of the reasons why people flock to  cinemas to watch sequels and ``rebooted'' movies
is that they know what they are getting into: they are familiar with
the characters and the basic setup. Keeping with the cinematographic
metaphor, we will cut to the chase and assume that the reader is
already on board with the idea that the best assurance of the
correctness of  properties of a formal calculus is a
machine-checked proof. In fact, this is becoming the norm in the
semantics of programming languages; it is less established when
\emph{concurrency} is concerned, although the latter is of
overwhelming importance in modern computing.

Enter the Concurrent Calculi Formalization Benchmark~\cite{ConcBench} (CCFB), which 
provides a suite of problems designed to evaluate the formalization of
concurrent and distributed language models, in particular process
calculi. Following the methodology of the POPLMark challenge, CCFB
aims to assess current mechanization techniques, identify optimal
formalization patterns, and ideally drive improvements in proof assistant
tooling.
The CCFB
framework 
 structures this  evaluation around three orthogonal challenges: the management of
linear resources, the handling of scope extrusion, and the
implementation of coinductive proof techniques. For the latter, the
challenge is to prove the ``context lemma'' for \emph{strong barbed
  bisimilarity}, namely that this notion of bisimilarity can be turned
into a congruence by making it sensitive to substitution and parallel
composition.


Induction is one of the great success stories of proof assistants:
although systems differ in the details of their foundations and
automation, inductive definitions and proofs by induction are supported
in essentially comparable ways across the board. Coinduction has had a
less uniform history; for a survey,  we refer to~\cite{HandPACoind}.
The earliest systematic approach, due to
Paulson, treats coinductive predicates as greatest fixed points
of monotone operators, from which the corresponding coinduction
principles are derived. 

A different tradition, familiar from Rocq/Coq, is based on guarded
corecursion: coinductive objects are introduced by constructors, and
recursive calls must occur under such constructors to ensure
productivity. This discipline is often brittle in proof developments,
especially when coinductive arguments are nested, combined with
induction, or hidden behind auxiliary lemmas. Agda has explored
several alternatives over time, from delays to sized types and
observational presentations of codata to guarded type theories with
the ``later'' modality~\cite{Nakano00}. Similarly, Beluga has embraced sized types and 
copatterns~\cite{ThibodeauCP16}. These developments are particularly
relevant for mechanized concurrency, where bisimulations and up-to
techniques routinely require coinductive proofs that are interleaved
with substantial inductive reasoning about syntax, transitions,
substitutions, and contexts.

The present paper is the conclusion of the Beluga ``trilogy'' of
solutions to CCFB, initiated with~\cite{ZackonSMP25} 
 --- featuring, among other contributions, a solution to the linearity challenge --- and
followed by~\cite{Cecilia24} with the mechanization of the Harmony Lemma. As in any good sequel,
we borrow from the latter paper the implementation of the syntax
and the LTS-based operational semantics of the $\pi$-calculus, while
crucially extending it with \emph{replication}. After all, this is what makes
behavioral equivalence interesting.  And interesting it turns out to be, as there is a twist in this movie: the rule governing replication as proposed in~\cite{ConcBench} is insufficient to establish the main result of the challenge. In fact, under this rule, structural congruence is not included in strong barbed similarity. This is problematic, because structural
congruence is the baseline syntactic equivalence between processes and
the context lemma relies crucially on this specific inclusion.

While this lapse is mildly embarrassing, given the singleton intersection between the authors of the Benchmark and of the present paper, it is once more a reaffirmation of the usefulness of mechanizations in proof assistants.

\smallskip
We will assume familiarity with the basic notions of
the $\pi$-calculus as in the first chapters of~\cite{SWBook}, as well
as a working knowledge of Beluga, both of its syntax and of its
approach to proof checking.  In particular we will only briefly touch
upon the way Beluga handles coinduction via observations and refer the reader to~\cite{ThibodeauCP16} for the theory and to~\cite{MomiglianoPT19} for an application.

In the following, the statements of informal lemmas, theorems and
proofs are hyperlinked via the accompanying cute icon {\includegraphics[height=0.02\textheight]{latex/beluga-logo.png}} to their formalization in the repository:
\begin{center}
  \begin{footnotesize}
    \url{https://github.com/LeaTrogni/formalizing-barbed-similarity-for-the-pi-calculus-in-beluga}
  \end{footnotesize}
\end{center}
As a final note,  we have cut one corner:  we have
concentrated on similarity, precongruence, etc., rather than bisimilarity and
congruence. Extending the results to the symmetric case would only duplicate the code with no new insights and could  be left to automation, perhaps a coding agent.

\section{The \texorpdfstring{$\pi$}{pi}-Calculus and its Operational Semantics}\label{sec:picalc}
To make the paper self-contained, we present
the main definitions of the syntax and the labelled transition system (LTS)
of the fragment of the $\pi$-calculus under study.
For more details about the informal definitions,
we refer the reader to~\cite{SWBook}.
\subsection{Syntax}
We follow the definitions of CCFB3 excluding sums and (mis)match, while
we diverge from it by allowing the calculus to be name-passing
rather than value-passing. That original separation of concerns is not
a simplification in the HOAS approach, but just a quirk. 
\begin{equation*}
  P, Q \ \vcentcolon \vcentcolon = \ \textbf{0} \ \mid \ x(y).P \ \mid \ \bar{x}y.P \ \mid \ (P \mid Q) \ \mid \ (\nu x) P \ \mid \ ! P
\end{equation*} 

Recall that the input prefix $x(y).P$ and the restriction $(\nu y) P$ both bind
the name $y$ in $P$, 
while any other
occurrence of names in a process is free. 
Accordingly, we write {\myfont fn($P$)} and {\myfont bn($P$)} for the
sets of free and bound names occurring in a process.

Beluga is based on a two-level system, with the LF level for representing data and a computation level for reasoning about it via recursion and  pattern matching. In the encoding, names are defined by
\begin{wrapfigure}[12]{r}{0.48\textwidth}
\vspace{-0.8\baselineskip}
  \begin{minipage}{\linewidth}
    \centering
\begin{lstlisting}[
      style=belugastyleframes,
      basicstyle=\footnotesize\ttfamily,
      linewidth=\linewidth,
      linerange={Names_and_Processes-End}
    ]
%%%%%%%%%%%%%%%%%%%%%%%%%%%%%%%%%%%%%%%%%%%%%%%%%%%%%%%%%

%%% Definitions %%%
% In order to encode input and restriction processes, which both bind names, we exploit higher order abstract syntax
% and use higher-order functions from names to proc.
% In this way we don't need to give an explicit name to bound names and deal with alpha-renaming or substitution.

%% Names_and_Processes %%
LF names: type =;

LF proc: type =
 | p_zero: proc                                
 | p_in: names → (names → proc) → proc         
 | p_out: names → names → proc → proc          
 | p_par: proc → proc → proc                   
 | p_res: (names → proc) → proc                
 | p_rep: proc → proc; ---infix p_par 11 left.

schema ctx = names;
%% End %%

% Equality of processes
LF eqp: proc → proc → type =
  | refp: eqp X X
;

% Equality of names
LF eqn: names → names → type =
  | refl: eqn N N
;

% empty type
LF not_possible : type = 
;



% Beniamino's trick - Predicate for contexts for a couple of terms
% Meta level
% PCtx [g' |- P] [g |- CP] [g' |- Q] [g |- CQ] holds if there is a context C such that C[P] = CP and C[Q] = CQ

%% Contexts %%
inductive PCtx: (g:ctx)(g':ctx)[g' ⊢ proc] → [g ⊢ proc] → [g' ⊢ proc] → [g ⊢ proc] → ctype =
  | pidM: PCtx [g ⊢ P] [g ⊢ P] [g ⊢ Q] [g ⊢ Q]
  | pinM: PCtx [g' ⊢ P] [g,x:names ⊢ CP] [g' ⊢ Q] [g,x:names ⊢ CQ] 
       → PCtx [g' ⊢ P] [g ⊢ p_in X (\x.CP)] [g' ⊢ Q] [g ⊢ p_in X (\x.CQ)]
  | poutM: PCtx [g' ⊢ P] [g ⊢ CP] [g' ⊢ Q] [g ⊢ CQ] 
        → PCtx [g' ⊢ P] [g ⊢ p_out X Y CP] [g' ⊢ Q] [g ⊢ p_out X Y CQ]
  ...								 
%% End %%

% Structural congruence
LF cong: proc → proc → type =
% Abelian Monoid Laws for Parallel Composition
  | par_assoc: cong (P p_par (Q p_par R)) ((P p_par Q) p_par R)
  | par_unit: cong (P p_par p_zero) P
  | par_comm: cong (P p_par Q) (Q p_par P)
% Scope Extension Laws
  | sc_ext_zero: cong (p_res \x.p_zero) p_zero
  | sc_ext_par: cong ((p_res P) p_par Q) (p_res \x.((P x) p_par Q))
  | sc_ext_res: cong (p_res \x.(p_res \y.(P x y))) (p_res \y.(p_res \x.(P x y)))
% Replication unfolding
  | rep_unfold: cong (p_rep P) (P p_par (p_rep P))
% Compatibility Laws
  | c_in: ({y:names} cong (P y) (Q y)) → cong (p_in X P) (p_in X Q)
  | c_out: cong P Q → cong (p_out X Y P) (p_out X Y Q)
  | c_par: cong P P' → cong (P p_par Q) (P' p_par Q)
  | c_res: ({x:names} cong (P x) (Q x)) → cong (p_res P) (p_res Q)
  | c_rep: cong P Q -> cong (p_rep P) (p_rep Q)
% Equivalence Relation Laws
  | c_ref: cong P P
  | c_sym: cong P Q → cong Q P
  | c_trans: cong P Q → cong Q R → cong P R
;
--infix cong 11 left.

%% Structural_congruence %%
inductive Cong: (g:ctx) [g ⊢ proc] → [g ⊢ proc] → ctype =
  ...
% Replication unfolding
  | Rep_unfold: Cong [g ⊢ (p_rep P)] [g ⊢ (P p_par (p_rep P))]
% Context closure
  | C_ctx: Cong [g' ⊢ P] [g' ⊢ Q] → PCtx [g' ⊢ P] [g ⊢ CP] [g' ⊢ Q] [g ⊢ CQ] 
        → Cong [g ⊢ CP] [g ⊢ CQ]
% Equivalence Relation Laws
  | C_ref: Cong [g ⊢ P] [g ⊢ P]
  | C_sym: Cong [g ⊢ P] [g ⊢ Q] → Cong [g ⊢ Q] [g ⊢ P]
  | C_trans: Cong [g ⊢ P] [g ⊢ Q] → Cong [g ⊢ Q] [g ⊢ R] → Cong [g ⊢ P] [g ⊢ R]
;
%% End %%

%%%%%%%%%%%%%%%%%%%%%%%%%%%%%%%%%%%%%%%%%%%%%%%%%%%%%%%%%

%%% LTS Semantics %%%
% We follow "Pi-Calculus in (Co)Inductive Type Theory" [Honsell et al. 2001] for the encoding of late LTS semantics.
% We define two different types for free/bound actions, and two different relations for transitions via free/bound actions.
% The result of a free transition is a process, while the result of a bound transition is a function from names to processes:
% instead of stating the bound name involved in the transition explicitly, that name is the argument of the aforementioned function.

% Free Actions
LF f_act: type =
  | f_tau: f_act
  | f_out: names → names → f_act
;

% Bound Actions
LF b_act: type =
  | b_in: names → b_act
  | b_out: names → b_act
;

% Equality of actions
LF eqf: f_act → f_act → type =
  | reff: eqf A A
;

LF eqb: b_act → b_act → type =
  | refb: eqb A A
;

% Transition Relation
LF fstep: proc → f_act → proc → type =
  | fs_out: fstep (p_out X Y P) (f_out X Y) P
  | fs_par1: fstep P A P' → fstep (P p_par Q) A (P' p_par Q)
  | fs_par2: fstep Q A Q' → fstep (P p_par Q) A (P p_par Q')
  | fs_com1: fstep P (f_out X Y) P' → bstep Q (b_in X) Q'
    → fstep (P p_par Q) f_tau (P' p_par (Q' Y))
  | fs_com2: bstep P (b_in X) P' → fstep Q (f_out X Y) Q'
    → fstep (P p_par Q) f_tau ((P' Y) p_par Q')
  | fs_res: ({z:names} fstep (P z) A (P' z))
    → fstep (p_res P) A (p_res P')
  | fs_close1: bstep P (b_out X) P' → bstep Q (b_in X) Q'
    → fstep (P p_par Q) f_tau (p_res \z.((P' z) p_par (Q' z)))
  | fs_close2: bstep P (b_in X) P' → bstep Q (b_out X) Q'
    → fstep (P p_par Q) f_tau (p_res \z.((P' z) p_par (Q' z)))
  | fs_rep: fstep P A P' → fstep (p_rep P) A (P' p_par (p_rep P))
  | fs_rep_com: fstep P (f_out X Y) P1 → bstep P (b_in X) P2
    → fstep (p_rep P) f_tau ((P1 p_par (P2 Y)) p_par (p_rep P))
  | fs_rep_close: bstep P (b_out X) P1 → bstep P (b_in X) P2
    → fstep (p_rep P) f_tau ((p_res \z.((P1 z) p_par (P2 z))) p_par (p_rep P))

and bstep: proc → b_act → (names → proc) → type =
  | bs_in: bstep (p_in X P) (b_in X) P
  | bs_par1: bstep P A P' → bstep (P p_par Q) A \x.((P' x) p_par Q)
  | bs_par2: bstep Q A Q' → bstep (P p_par Q) A \x.(P p_par (Q' x))
  | bs_res: ({z:names} bstep (P z) A (P' z))
    → bstep (p_res P) A \x.(p_res \z.(P' z x))
  | bs_open: ({z:names} fstep (P z) (f_out X z) (P' z))
    → bstep (p_res P) (b_out X) P'
  | bs_rep: bstep P A P' → bstep (p_rep P) A \x.((P' x) p_par (p_rep P))
;


%%% Barbed similarity %%%

inductive Barb_in : (g:ctx) [g |- proc] -> [g |- names] -> ctype =
  | Barb_in_b : [g |- bstep P (b_in X) (\x. P')] -> Barb_in [g |- P] [g |- X]
;

inductive Barb_out : (g:ctx) [g |- proc] -> [g |- names] -> ctype =
  | Barb_out_f : [g |- fstep P (f_out X Y) P'] -> Barb_out [g |- P] [g |- X]
  | Barb_out_b : [g |- bstep P (b_out X) (\x. P')] -> Barb_out [g |- P] [g |- X]
;

LF barb_in_rew : proc -> names -> type =
  | biw_zero : barb_in_rew ((p_in X Q) p_par R) X
  | biw_add: ({x : names} barb_in_rew (P x) X) -> barb_in_rew (p_res P) X
;

LF barb_in_red : proc -> names -> type =
  | barb_in_red_def: P cong P' -> barb_in_rew P' X -> barb_in_red P X
;

LF barb_out_rew : proc -> names -> type =
  | bow_zero : barb_out_rew ((p_out X Y Q) p_par R) X
  | bow_add: ({x : names} barb_out_rew (P x) X) -> barb_out_rew (p_res P) X
;

LF barb_out_red : proc -> names -> type =
  | barb_out_red_def: P cong P' -> barb_out_rew P' X -> barb_out_red P X
;

%% BarbSim %%
coinductive BarbSim : (g:ctx) [g ⊢ proc] → [g ⊢ proc] → ctype =
  | (BarbSim_barb_in : BarbSim [g ⊢ P] [g ⊢ Q])
              :: Barb_in [g ⊢ P] [g ⊢ X] → Barb_in [g ⊢ Q] [g ⊢ X]
  | (BarbSim_barb_out : BarbSim [g ⊢ P] [g ⊢ Q])
              :: Barb_out [g ⊢ P] [g ⊢ X] → Barb_out [g ⊢ Q] [g ⊢ X]
  | (BarbSim_tau : BarbSim [g ⊢ P] [g ⊢ Q])
              :: [g ⊢ fstep P f_tau P'] → Ex_sim_barb [g ⊢ P] [g ⊢ Q] [g ⊢ P']

and inductive Ex_sim_barb : (g:ctx) [g ⊢ proc] → [g ⊢ proc] → [g ⊢ proc] → ctype =
  | Ex_sim_barb_def : [g ⊢ fstep Q f_tau Q'] → BarbSim [g ⊢ P'] [g ⊢ Q'] 
                   → Ex_sim_barb [g ⊢ P] [g ⊢ Q] [g ⊢ P']
;
%% End %%

%%% Barbed similarity up to%%%

coinductive BarbSimUpTo : (g:ctx) [g |- proc] -> [g |- proc] -> ctype =
  | (BarbSimUpTo_barb_in : BarbSimUpTo [g |- P] [g |- Q])
              :: Barb_in [g |- P] [g |- X] -> Barb_in [g |- Q] [g |- X]
  | (BarbSimUpTo_barb_out : BarbSimUpTo [g |- P] [g |- Q])
              :: Barb_out [g |- P] [g |- X] -> Barb_out [g |- Q] [g |- X]
  | (BarbSimUpTo_f_tau : BarbSimUpTo [g |- S1] [g |- S2])
              :: [g |- fstep S1 f_tau S1'] -> Ex_sim_barb_up_to [g |- S1] [g |- S2] [g |- S1']

and inductive Ex_sim_barb_up_to : (g:ctx) [g |- proc] -> [g |- proc] -> [g |- proc] -> ctype =
  | Ex_sim_barb_up_to_def : [g |- fstep Q f_tau Q'] -> BarbSim [g |- P'] [g |- P''] -> BarbSimUpTo [g |- P''] [g |- Q''] -> BarbSim [g |- Q''] [g |- Q'] 
                 -> Ex_sim_barb_up_to [g |- P] [g |- Q] [g |- P']
;

%%% Barbed precongruence %%%
inductive BarbPre: (g':ctx) [g' |- proc] -> [g' |- proc] -> ctype =
  | BC:  (g':ctx) {P:[g' |- proc]} {Q:[g' |- proc]} 
    ((g : ctx) {CP:[g |- proc]} {CQ:[g |- proc]} PCtx [g' ⊢ P] [g ⊢ CP] [g' ⊢ Q] [g ⊢ CQ] -> BarbSim [g |- CP] [g |- CQ]) 
    -> BarbPre [g' |- P] [g' |- Q]
;

%%% Characterizations of barbed precongruence

% Context lemma characterization of barbed precongruence
inductive BarbPre': (g:ctx) [g |- proc] -> [g |- proc] -> ctype =
  | BC' : ((h:ctx) {$S : $[h |- g]} {R : [h |- proc]} -> BarbSim [h |- P[$S] p_par R] [h |- Q[$S] p_par R]) -> BarbPre' [g |- P] [g |- Q]
;

% Largest precongruence included in barbed similarity
coinductive LargPreCong: (g':ctx) [g' |- proc] -> [g' |- proc] -> ctype =
  | (LP_BarbSim : LargPreCong [g' |- P] [g' |- Q])
              :: BarbSim [g' |- P] [g' |- Q]
  | (LP_PreCong : LargPreCong [g' |- P] [g' |- Q])
              :: PCtx [g' ⊢ P] [g ⊢ CP] [g' ⊢ Q] [g ⊢ CQ] -> LargPreCong [g ⊢ CP] [g ⊢ CQ]
;


%%% OTHER IMPLICATION OF THE CONTEXT LEMMA

% Definition of natural numbers
LF nat : type =
  | z : nat
  | succ : nat -> nat
;

LF eqnat : nat -> nat -> type =
  | refnat: eqnat n n
;

LF leq : nat -> nat -> type =
  | samenat: leq n n
  | less: leq m n -> leq m (succ n)
;

% size of a context
inductive Dim : {g:ctx} [|- nat] -> ctype =
  | Dim_z: Dim [] [|- z]
  | Dim_succ: Dim [g] [|- n] -> Dim [g,x:names] [|- succ n]
;

% Multi_step_tau [|- n] [g |- P] [g |- Q] iff P evolves to Q in exactly n tau-transitions
inductive Multi_step_tau : (g:ctx) [|- nat] -> [g |- proc] -> [g |- proc] -> ctype =
  | Ms_tau_z: Multi_step_tau [|- z] [g |- P] [g |- P]
  | Ms_tau_succ: [g |- fstep P1 f_tau P2] -> Multi_step_tau [|- n] [g |- P2] [g |- P3] -> Multi_step_tau [|- succ n] [g |- P1] [g |- P3]
;

% prefix of a context
inductive Ctx_Pref: {g1:ctx} {g3:ctx} ctype =
  | Ctx_Pref_same: Ctx_Pref [g1] [g1]
  | Ctx_Pref_add: Ctx_Pref [g1] [g3] -> Ctx_Pref [g1] [g3,x:names]
;

% concatenation of contexts
inductive Ctx_App: {g1:ctx} {g2:ctx} {g3:ctx} ctype =
  | Ctx_App_empty: Ctx_App [g1] [] [g1]
  | Ctx_App_add: Ctx_App [g1] [g2] [g3] -> Ctx_App [g1] [g2,x:names] [g3,x:names]
;

% same name in a bigger context
inductive Name_Exp: (g2:ctx) (g3:ctx) [g2 |- names] -> [g3 |- names] -> ctype =
  | Name_Exp_same: Name_Exp [g2 |- X] [g2 |- X]
  | Name_Exp_add: Name_Exp [g2 |- X2] [g3 |- X3] -> Name_Exp [g2 |- X2] [g3,x:names |- X3[..]]
;

% concatenation of substitutions
inductive Subst_App: (g1:ctx) (g2:ctx) (g3:ctx) $[g1 |- g1] -> $[g2 |- g2] -> $[g3 |- g3] -> ctype =
  | Subst_App_empty: Subst_App $[g1 |- $S1] $[|- ^] $[g1 |- $S1]
  | Subst_App_add: Subst_App $[g1 |- $S1] $[g2 |- $S2] $[g3 |- $S3] 
    -> Name_Exp [g2,x:names |- X2] [g3,x:names |- X3] 
    -> Subst_App $[g1 |- $S1] $[g2,x:names |- $S2[..],X2] $[g3,x:names |- $S3[..],X3]
;

% expansion of a substitution
inductive Subst_Exp: (g1:ctx) (g3:ctx) $[g1 |- g1] -> $[g3 |- g3] -> ctype =
  | Subst_Exp_Def: Subst_App $[g1 |- $S1] $[g2 |- ..] $[g3 |- $S3] -> Subst_Exp $[g1 |- $S1] $[g3 |- $S3]
;

% Context built from substitution
%{inductive Ctx_pref_zero: (g':ctx) {g:ctx} $[g' |- g'] -> [g',x:names |- proc] -> ctype =
  | Ctx_pref_zero_empty: Ctx_pref_zero [] $[g' |- ..] [g',x:names |- p_zero]
  | Ctx_pref_zero_add: Partial_subst [g] [g'] Ctx_pref_zero [g] $[g' |- $S] [g',x:names |- P0] -> Ctx_pref_zero [g,x:names] $[g |- $S,Y] [g,x:names |- p_out x Y[..] P0]
;

inductive Ctx_pref_proc: (g:ctx) {g':ctx} [g |- proc] -> [g,x:names |- proc] -> ctype =
  | Ctx_pref_proc_empty: Ctx_pref_proc [] [g |- P] [g,x:names |- P[..]]
  | Ctx_pref_proc_add: Ctx_pref_proc [g'] [g |- P] [g,x:names |- P'] -> Ctx_pref_proc [g',x:names] [g |- P] [g,x:names |- p_in x \x. P'[..]]
;}%

%{inductive Ctx_par_subst: (g':ctx) {g:ctx} $[g |- g] -> [g' |- proc] -> [g' |- proc] -> [g',x:names |- proc] -> ctype =
  | Ctx_par_subst_empty: Ctx_par_subst [] $[ |- ^] [g' |- R] [g' |- P] [g',x:names |- (p_zero p_par P[..]) p_par R[..]]
  | Ctx_par_subst_add: Ctx_par_subst [g] $[g |- $S] [g',x:names |- R] [g',x:names |- P] [g',x:names,y:names |- (Po p_par Pi) p_par R[..]] -> Name_Exp [g,x:names |- X] [g',x:names |- X'] -> Ctx_par_subst [g,x:names] $[g,x:names |- $S[..],X] [g',x:names |- R] [g',x:names |- P] [g',x:names,y:names |- ((p_out x X'[..] Po) p_par (p_in y \x. Pi))]
;}%
\end{lstlisting}
    \vspace{-0.5\baselineskip}
    \caption{Encoding of names and processes.}
    \label{fig:proc}
    \vspace{-1.2\baselineskip}
  \end{minipage}
\end{wrapfigure}
   an LF type {\ttfamily \small
  \color{belugagreen}names} without any constructor, together with a
context schema that allows names to occur in other (open) LF
objects. Since we define no other context schema, all the
variables in open terms will have type {\ttfamily \small
  \color{belugagreen}names}.
Processes are LF objects, encoded using (weak) HOAS for input and restriction, as well known and detailed in~\cite{Cecilia24};
their encoding is displayed in Fig.~\ref{fig:proc}.
Throughout the development, lowercase identifiers denote LF types, while those beginning with an uppercase letter denote computation-level data.

\subsection{Operational Semantics}

\emph{Actions} include inputs $x(y)$, free and bound outputs $\bar{x}y$ and $\bar{x}(y)$, 
and internal communications $\tau$~\clink{1\_definitions.bel}{104}{114}. In inputs and bound outputs, the occurrences of 
the name $y$ are bound, while all other name occurrences in actions are free; 
from this, we obtain the standard notions of
free names, bound names and names occurring in an action $\alpha$, respectively denoted as {\myfont
fn($\alpha$)}, {\myfont bn($\alpha$)} and {\myfont n($\alpha$)}.

%

Our LTS, displayed in Fig.~\ref{fig:lts}, presents some differences from the one
in CCFB3. First, and less importantly, we have adopted \emph{late}
instead of \emph{early} semantics: not only can they  be proven  equivalent in terms of transitions, see~\cite{Cecilia24} for the mechanization of this folk result, but
they induce the same strong barbed similarity, as established by Lemma~\ref{le:barb_equiv} below and the discussion that follows.
Secondly, as we teased in the introduction, we add two replication rules for
communication to the LTS presented in CCFB3, since without them structural
congruence is not included in strong barbed similarity: in the accompanying formalization~\clinktwo{replication_rule_change} 
we exhibit a pair of structurally congruent processes that are not barbed similar.
This conflicts with the intended role of structural congruence as the strongest equivalence between processes, and the proof of the context lemma relies on this specific inclusion.
The adoption of the two replication rules for communication
is not arbitrary; after all, our LTS follows that in Sangiorgi and Walker's textbook,
modulo late semantics. A popular alternative, found e.g.\ in~\cite{DBLP:journals/tcs/HonsellMS01,Bengtson2009},
uses a single rule to generate arbitrary copies of the replicated process.
We refer to Sangiorgi and Walker for a discussion of the trade-offs between these formulations; the correspondence is understood modulo structural congruence.


\begin{figure}[ht]
\centering
\begin{footnotesize}
\fbox{\vbox {\advance \hsize by -2\fboxsep \advance \hsize by -2\fboxrule \linewidth\hsize
\begin{mathpar}
 \inferrule[$\mkern 30mu$ S-In] { } {x(z).P \, \xrightarrow{x(z)} \, P}
  \and \mkern 75mu \inferrule[$\mkern 9mu$ S-Out] { } {\bar{x}y.P \, \xrightarrow{\bar{x}y} \, P} \mkern 16mu \\
  \inferrule[$\mkern 77mu$ S-Par-L] {P \xrightarrow{\alpha} P' \\ ${\myfont bn}$(\alpha)$ $\cap$ {\myfont fn}$(Q) = \emptyset} {P \mid Q \ \xrightarrow{\alpha} \ P' \mid Q}
  \mkern 68mu \inferrule[$\mkern 77mu$ S-Par-R] {Q \xrightarrow{\alpha} Q' \\ ${\myfont bn}$(\alpha)$ $\cap$ {\myfont fn}$(P) = \emptyset} {P \mid Q \ \xrightarrow{\alpha} \ P \mid Q'} \\
  \inferrule[$\mkern 44mu$ S-Com-L] {P \xrightarrow{\bar{x}y} P' \\ Q \xrightarrow{x(z)} Q'} {P \mid Q \, \xrightarrow{\tau} \, P' \mid Q'\{  y/z \}} \mkern 13mu
  \and \inferrule[$\mkern 44mu$ S-Com-R] {P \xrightarrow{x(z)} P' \\ Q \xrightarrow{\bar{x}y} Q'} {P \mid Q \, \xrightarrow{\tau} \, P' \{ y/z \} \mid Q'} \\
  \inferrule[$\mkern 50mu$ S-Res] {P \xrightarrow{\alpha} P' \\ z \notin$ {\myfont n}$(\alpha)} {(\nu z) P \, \xrightarrow{\alpha} \, (\nu z) P'}
  \and \mkern 22mu \mkern 8mu \inferrule[$\mkern 32mu$ S-Open] {P \xrightarrow{\bar{x}z} P' \\ z \neq x} {(\nu z)P \, \xrightarrow{\bar{x}(z)} \, P'} \mkern 12mu \\
  \inferrule[$\mkern 38mu$ S-Close-L] {P \xrightarrow{\bar{x}(z)} P' \\ Q \xrightarrow{x(z)} Q'} {P \mid Q \ \xrightarrow{\tau} \ (\nu z) (P' \mid Q')}
  \and \inferrule[$\mkern 38mu$ S-Close-R] {P \xrightarrow{x(z)} P' \\ Q \xrightarrow{\bar{x}(z)} Q'} {P \mid Q \ \xrightarrow{\tau} \ (\nu z) (P' \mid Q')}\\
  \inferrule[] { } { }
  \mkern -40mu
  \and
   \inferrule[$\mkern 17mu$ S-Rep] {P \xrightarrow{\alpha} P'} {! P \, \xrightarrow{\alpha} \, P' \mid ! P} \mkern 35mu
  \and \inferrule[$\mkern 32mu$ S-Rep-Com] {P \xrightarrow{\bar{x}y} P' \\ P \xrightarrow{x(z)} P''} {!P \, \xrightarrow{\tau} \, (P' \mid P''\{  y/z \}) \mid ! P} \mkern -8mu
  \and \inferrule[$\mkern 30mu$ S-Rep-Close] {P \xrightarrow{\bar{x}(z)} P' \\ P \xrightarrow{x(z)} P''} {! P \ \xrightarrow{\tau} \ ((\nu z) (P' \mid P'')) \mid ! P}
\end{mathpar}
}}
\end{footnotesize}
\vspace{-5mm}
\caption{Transition rules.}
\label{fig:lts}
\end{figure}

Following the previous formalizations in~\cite{MillerPI,DBLP:journals/tcs/HonsellMS01,DBLP:journals/tocl/TiuM10}, in Beluga we separate free steps, where the action does not bind any variable, from bound steps, where the action has a bound variable.
Since some rules involve both free and bound actions at the same time, the LF types for free and bound steps are mutually defined. This approach does duplicate some rules, but gets rid of \emph{all} provisos on free and bound occurrences, which are now taken care of by dependencies (or lack thereof) in second-order process variables.~\clink{1\_definitions.bel}{125}{155}

\subsection{Process Contexts and Structural Congruence}

The notion of \emph{(pre)congruence} is of paramount importance in the
$\pi$-calculus: both in the reduction semantics, through \emph{structural} congruence, and in the study of behavioral equivalence.  Informally, a congruence is an equivalence relation that
preserves the behavior of the constructors. Often, this is described
 in textbooks
(e.g.~\cite{SWBook}) via the notion of \emph{process
  context}: a process where a (non-degenerate) occurrence of 
\textbf{0} is replaced by a
hole. 
Given a context $C$ and a process $P$, $C[P]$ denotes the process
obtained by replacing the hole in $C$ with $P$.  Then, a process
\emph{pre}congruence is a binary relation $\mathcal{R}$ which is a preorder
such that if $P \mathbin{\mathcal{R}} Q$, then, for each context $C$,
$C[P] \mathbin{\mathcal{R}} C[Q]$.

 There
 is however a peculiarity: since the goal of contexts is to observe how
 processes behave in every possible environment, they need to be able
 to capture the free variables in the processes they are
 filled with. In other words,  contexts do not respect  $\alpha$-equivalence.

 This brings in some additional issues with respect to a mechanization, in
 particular in a HOAS setting, where such entities are a
 non-starter~\footnote{HOAS is indeed compatible with weaker notions
   such as \emph{evaluation} contexts, i.e., those that do not cross a
   binder.}. A workaround is to close the relation under
 \emph{compatibility} (\cite{Lassen98,Gay_Vasconcelos_2025}), as
 described by the rules in our case at the top of
 Fig.~\ref{fig:compcong}. Concretely, when we say that
 \emph{structural congruence} is the smallest congruence satisfying
 the axioms in the bottom of Fig.~\ref{fig:compcong}, we are
 instantiating $\cal R$ with $\equiv$.

This is indeed the way structural congruence was formalized as an LF type {\ttfamily \small \color{belugagreen}cong} in~\cite{Cecilia24}, which we have extended by adding the compatibility and unfolding laws for replication.~\clink{1\_definitions.bel}{51}{73}



\begin{figure}[th]
  \begin{footnotesize}
\centering
\fbox{\vbox {\advance \hsize by -2\fboxsep \advance \hsize by -2\fboxrule \linewidth\hsize
\begin{mathpar}
  \inferrule[$\mkern 38mu$ C-In] {P \mathbin{\mathcal{R}} Q} {x(y).P \, \mathbin{\mathcal{R}} \, x(y).Q}
  \and \inferrule[$\mkern 17mu$ C-Out] {P \mathbin{\mathcal{R}} Q} {\bar{x}y.P \, \mathbin{\mathcal{R}} \, \bar{x}y.Q}
  \and \inferrule[$\mkern 36mu$ C-Par] {P \mathbin{\mathcal{R}} P' \\ Q \mathbin{\mathcal{R}} Q'} {P \mid Q \, \mathbin{\mathcal{R}} \, P' \mid Q'}\\
  \inferrule[$\mkern 31mu$ C-Res] {P \mathbin{\mathcal{R}} Q} {(\nu x) P \, \mathbin{\mathcal{R}} \, (\nu x) Q}
  \and \inferrule[$\mkern 1mu$ C-Rep] {P \mathbin{\mathcal{R}} Q} {! P \, \mathbin{\mathcal{R}} \, ! Q}\\ 
   \hbox to 16cm{\leaders\hbox to 10pt{---}\hfil} \\
  \inferrule[$\mkern 40mu$ Par-Assoc] { } {P \mid (Q \mid R) \, \equiv \, (P \mid Q) \mid R}
  \mkern 92mu \inferrule[Par-Unit] { } {P \mid \textbf{0} \, \equiv \, P}
  \mkern 32mu \and \mkern 8mu \inferrule[$\mkern 2mu$ Par-Comm] { } {P \mid Q \, \equiv \, Q \mid P} \mkern 42mu \\
  \inferrule[] { } { }
  \mkern 35mu \inferrule[Sc-Ext-Zero] { } {\ (\nu x)\, \textbf{0} \, \equiv \, \textbf{0} \ \ }
  \mkern 76mu \inferrule[$\mkern 38mu$ Sc-Ext-Par] {x \notin$ {\myfont fn}$(Q)} {(\nu x) P \mid Q \, \equiv \, (\nu x) (P \mid Q)}
  \mkern 29mu \inferrule[\qquad \ \ Sc-Ext-Res] { } {(\nu x) (\nu y) P \, \equiv \, (\nu y) (\nu x) P} \\
  \inferrule[$\mkern -15mu$ Rep-Unfold] { } {! P \, \equiv \, P \mid ! P} \\
\end{mathpar}
}}
\end{footnotesize}
\vspace{-5mm}
\caption{Compatibility rules and structural congruence axioms.}
\label{fig:compcong}
\end{figure}

There are cases where going the compatibility route is inconvenient, if
not plain inadequate, one being the definition of \emph{strong barbed
  (pre)congruence}, as we shall see later on, and where contexts must
be encoded. In a breakthrough, the authors of~\cite{LancelotAV25}
introduced a clever trick to make contexts compatible (no pun
intended) with a HOAS encoding, by means of a quaternary relation that
implicitly captures the essence of a context:


\begin{equation*}
  \{(P,P',Q,Q') \mid P'=C[P], \ Q'=C[Q] \ \text{for some context} \ C\}
\end{equation*}

Unlike in the Abella setting where this relation was introduced,
a Beluga encoding is rather subtle as the LF contexts
must be made explicit. 
In fact, in general $\myfont\text{fn}(C[P]) \neq \myfont\text{fn}(P)$,
since the binders in $C$ may capture names that occur free in $P$,
and $C$ may also introduce names that are fresh for $P$;
while the latter can be avoided by working in an LF context that already contains
all free variables of $P$ and $C$, the former inevitably leads to
a mismatch between the two LF contexts.
Therefore, the formalization of this relation cannot be
an LF predicate, where all the related terms are defined in an implicit
context. Instead, we define the inductive type {\ttfamily \small
  \color{belugagreen}PCtx} in Fig.~\ref{fig:PCtx}, where the context of the second and
fourth processes is a prefix of the context of the first and third
processes.~\clink{1\_definitions.bel}{40}{48}

\begin{figure}[ht]
\begin{lstlisting}[style=belugastyleframes,linerange={Contexts-End},basicstyle=\footnotesize\ttfamily]
%%%%%%%%%%%%%%%%%%%%%%%%%%%%%%%%%%%%%%%%%%%%%%%%%%%%%%%%%

%%% Definitions %%%
% In order to encode input and restriction processes, which both bind names, we exploit higher order abstract syntax
% and use higher-order functions from names to proc.
% In this way we don't need to give an explicit name to bound names and deal with alpha-renaming or substitution.

%% Names_and_Processes %%
LF names: type =;

LF proc: type =
 | p_zero: proc                                
 | p_in: names → (names → proc) → proc         
 | p_out: names → names → proc → proc          
 | p_par: proc → proc → proc                   
 | p_res: (names → proc) → proc                
 | p_rep: proc → proc; ---infix p_par 11 left.

schema ctx = names;
%% End %%

% Equality of processes
LF eqp: proc → proc → type =
  | refp: eqp X X
;

% Equality of names
LF eqn: names → names → type =
  | refl: eqn N N
;

% empty type
LF not_possible : type = 
;



% Beniamino's trick - Predicate for contexts for a couple of terms
% Meta level
% PCtx [g' |- P] [g |- CP] [g' |- Q] [g |- CQ] holds if there is a context C such that C[P] = CP and C[Q] = CQ

%% Contexts %%
inductive PCtx: (g:ctx)(g':ctx)[g' ⊢ proc] → [g ⊢ proc] → [g' ⊢ proc] → [g ⊢ proc] → ctype =
  | pidM: PCtx [g ⊢ P] [g ⊢ P] [g ⊢ Q] [g ⊢ Q]
  | pinM: PCtx [g' ⊢ P] [g,x:names ⊢ CP] [g' ⊢ Q] [g,x:names ⊢ CQ] 
       → PCtx [g' ⊢ P] [g ⊢ p_in X (\x.CP)] [g' ⊢ Q] [g ⊢ p_in X (\x.CQ)]
  | poutM: PCtx [g' ⊢ P] [g ⊢ CP] [g' ⊢ Q] [g ⊢ CQ] 
        → PCtx [g' ⊢ P] [g ⊢ p_out X Y CP] [g' ⊢ Q] [g ⊢ p_out X Y CQ]
  ...								 
%% End %%

% Structural congruence
LF cong: proc → proc → type =
% Abelian Monoid Laws for Parallel Composition
  | par_assoc: cong (P p_par (Q p_par R)) ((P p_par Q) p_par R)
  | par_unit: cong (P p_par p_zero) P
  | par_comm: cong (P p_par Q) (Q p_par P)
% Scope Extension Laws
  | sc_ext_zero: cong (p_res \x.p_zero) p_zero
  | sc_ext_par: cong ((p_res P) p_par Q) (p_res \x.((P x) p_par Q))
  | sc_ext_res: cong (p_res \x.(p_res \y.(P x y))) (p_res \y.(p_res \x.(P x y)))
% Replication unfolding
  | rep_unfold: cong (p_rep P) (P p_par (p_rep P))
% Compatibility Laws
  | c_in: ({y:names} cong (P y) (Q y)) → cong (p_in X P) (p_in X Q)
  | c_out: cong P Q → cong (p_out X Y P) (p_out X Y Q)
  | c_par: cong P P' → cong (P p_par Q) (P' p_par Q)
  | c_res: ({x:names} cong (P x) (Q x)) → cong (p_res P) (p_res Q)
  | c_rep: cong P Q -> cong (p_rep P) (p_rep Q)
% Equivalence Relation Laws
  | c_ref: cong P P
  | c_sym: cong P Q → cong Q P
  | c_trans: cong P Q → cong Q R → cong P R
;
--infix cong 11 left.

%% Structural_congruence %%
inductive Cong: (g:ctx) [g ⊢ proc] → [g ⊢ proc] → ctype =
  ...
% Replication unfolding
  | Rep_unfold: Cong [g ⊢ (p_rep P)] [g ⊢ (P p_par (p_rep P))]
% Context closure
  | C_ctx: Cong [g' ⊢ P] [g' ⊢ Q] → PCtx [g' ⊢ P] [g ⊢ CP] [g' ⊢ Q] [g ⊢ CQ] 
        → Cong [g ⊢ CP] [g ⊢ CQ]
% Equivalence Relation Laws
  | C_ref: Cong [g ⊢ P] [g ⊢ P]
  | C_sym: Cong [g ⊢ P] [g ⊢ Q] → Cong [g ⊢ Q] [g ⊢ P]
  | C_trans: Cong [g ⊢ P] [g ⊢ Q] → Cong [g ⊢ Q] [g ⊢ R] → Cong [g ⊢ P] [g ⊢ R]
;
%% End %%

%%%%%%%%%%%%%%%%%%%%%%%%%%%%%%%%%%%%%%%%%%%%%%%%%%%%%%%%%

%%% LTS Semantics %%%
% We follow "Pi-Calculus in (Co)Inductive Type Theory" [Honsell et al. 2001] for the encoding of late LTS semantics.
% We define two different types for free/bound actions, and two different relations for transitions via free/bound actions.
% The result of a free transition is a process, while the result of a bound transition is a function from names to processes:
% instead of stating the bound name involved in the transition explicitly, that name is the argument of the aforementioned function.

% Free Actions
LF f_act: type =
  | f_tau: f_act
  | f_out: names → names → f_act
;

% Bound Actions
LF b_act: type =
  | b_in: names → b_act
  | b_out: names → b_act
;

% Equality of actions
LF eqf: f_act → f_act → type =
  | reff: eqf A A
;

LF eqb: b_act → b_act → type =
  | refb: eqb A A
;

% Transition Relation
LF fstep: proc → f_act → proc → type =
  | fs_out: fstep (p_out X Y P) (f_out X Y) P
  | fs_par1: fstep P A P' → fstep (P p_par Q) A (P' p_par Q)
  | fs_par2: fstep Q A Q' → fstep (P p_par Q) A (P p_par Q')
  | fs_com1: fstep P (f_out X Y) P' → bstep Q (b_in X) Q'
    → fstep (P p_par Q) f_tau (P' p_par (Q' Y))
  | fs_com2: bstep P (b_in X) P' → fstep Q (f_out X Y) Q'
    → fstep (P p_par Q) f_tau ((P' Y) p_par Q')
  | fs_res: ({z:names} fstep (P z) A (P' z))
    → fstep (p_res P) A (p_res P')
  | fs_close1: bstep P (b_out X) P' → bstep Q (b_in X) Q'
    → fstep (P p_par Q) f_tau (p_res \z.((P' z) p_par (Q' z)))
  | fs_close2: bstep P (b_in X) P' → bstep Q (b_out X) Q'
    → fstep (P p_par Q) f_tau (p_res \z.((P' z) p_par (Q' z)))
  | fs_rep: fstep P A P' → fstep (p_rep P) A (P' p_par (p_rep P))
  | fs_rep_com: fstep P (f_out X Y) P1 → bstep P (b_in X) P2
    → fstep (p_rep P) f_tau ((P1 p_par (P2 Y)) p_par (p_rep P))
  | fs_rep_close: bstep P (b_out X) P1 → bstep P (b_in X) P2
    → fstep (p_rep P) f_tau ((p_res \z.((P1 z) p_par (P2 z))) p_par (p_rep P))

and bstep: proc → b_act → (names → proc) → type =
  | bs_in: bstep (p_in X P) (b_in X) P
  | bs_par1: bstep P A P' → bstep (P p_par Q) A \x.((P' x) p_par Q)
  | bs_par2: bstep Q A Q' → bstep (P p_par Q) A \x.(P p_par (Q' x))
  | bs_res: ({z:names} bstep (P z) A (P' z))
    → bstep (p_res P) A \x.(p_res \z.(P' z x))
  | bs_open: ({z:names} fstep (P z) (f_out X z) (P' z))
    → bstep (p_res P) (b_out X) P'
  | bs_rep: bstep P A P' → bstep (p_rep P) A \x.((P' x) p_par (p_rep P))
;


%%% Barbed similarity %%%

inductive Barb_in : (g:ctx) [g |- proc] -> [g |- names] -> ctype =
  | Barb_in_b : [g |- bstep P (b_in X) (\x. P')] -> Barb_in [g |- P] [g |- X]
;

inductive Barb_out : (g:ctx) [g |- proc] -> [g |- names] -> ctype =
  | Barb_out_f : [g |- fstep P (f_out X Y) P'] -> Barb_out [g |- P] [g |- X]
  | Barb_out_b : [g |- bstep P (b_out X) (\x. P')] -> Barb_out [g |- P] [g |- X]
;

LF barb_in_rew : proc -> names -> type =
  | biw_zero : barb_in_rew ((p_in X Q) p_par R) X
  | biw_add: ({x : names} barb_in_rew (P x) X) -> barb_in_rew (p_res P) X
;

LF barb_in_red : proc -> names -> type =
  | barb_in_red_def: P cong P' -> barb_in_rew P' X -> barb_in_red P X
;

LF barb_out_rew : proc -> names -> type =
  | bow_zero : barb_out_rew ((p_out X Y Q) p_par R) X
  | bow_add: ({x : names} barb_out_rew (P x) X) -> barb_out_rew (p_res P) X
;

LF barb_out_red : proc -> names -> type =
  | barb_out_red_def: P cong P' -> barb_out_rew P' X -> barb_out_red P X
;

%% BarbSim %%
coinductive BarbSim : (g:ctx) [g ⊢ proc] → [g ⊢ proc] → ctype =
  | (BarbSim_barb_in : BarbSim [g ⊢ P] [g ⊢ Q])
              :: Barb_in [g ⊢ P] [g ⊢ X] → Barb_in [g ⊢ Q] [g ⊢ X]
  | (BarbSim_barb_out : BarbSim [g ⊢ P] [g ⊢ Q])
              :: Barb_out [g ⊢ P] [g ⊢ X] → Barb_out [g ⊢ Q] [g ⊢ X]
  | (BarbSim_tau : BarbSim [g ⊢ P] [g ⊢ Q])
              :: [g ⊢ fstep P f_tau P'] → Ex_sim_barb [g ⊢ P] [g ⊢ Q] [g ⊢ P']

and inductive Ex_sim_barb : (g:ctx) [g ⊢ proc] → [g ⊢ proc] → [g ⊢ proc] → ctype =
  | Ex_sim_barb_def : [g ⊢ fstep Q f_tau Q'] → BarbSim [g ⊢ P'] [g ⊢ Q'] 
                   → Ex_sim_barb [g ⊢ P] [g ⊢ Q] [g ⊢ P']
;
%% End %%

%%% Barbed similarity up to%%%

coinductive BarbSimUpTo : (g:ctx) [g |- proc] -> [g |- proc] -> ctype =
  | (BarbSimUpTo_barb_in : BarbSimUpTo [g |- P] [g |- Q])
              :: Barb_in [g |- P] [g |- X] -> Barb_in [g |- Q] [g |- X]
  | (BarbSimUpTo_barb_out : BarbSimUpTo [g |- P] [g |- Q])
              :: Barb_out [g |- P] [g |- X] -> Barb_out [g |- Q] [g |- X]
  | (BarbSimUpTo_f_tau : BarbSimUpTo [g |- S1] [g |- S2])
              :: [g |- fstep S1 f_tau S1'] -> Ex_sim_barb_up_to [g |- S1] [g |- S2] [g |- S1']

and inductive Ex_sim_barb_up_to : (g:ctx) [g |- proc] -> [g |- proc] -> [g |- proc] -> ctype =
  | Ex_sim_barb_up_to_def : [g |- fstep Q f_tau Q'] -> BarbSim [g |- P'] [g |- P''] -> BarbSimUpTo [g |- P''] [g |- Q''] -> BarbSim [g |- Q''] [g |- Q'] 
                 -> Ex_sim_barb_up_to [g |- P] [g |- Q] [g |- P']
;

%%% Barbed precongruence %%%
inductive BarbPre: (g':ctx) [g' |- proc] -> [g' |- proc] -> ctype =
  | BC:  (g':ctx) {P:[g' |- proc]} {Q:[g' |- proc]} 
    ((g : ctx) {CP:[g |- proc]} {CQ:[g |- proc]} PCtx [g' ⊢ P] [g ⊢ CP] [g' ⊢ Q] [g ⊢ CQ] -> BarbSim [g |- CP] [g |- CQ]) 
    -> BarbPre [g' |- P] [g' |- Q]
;

%%% Characterizations of barbed precongruence

% Context lemma characterization of barbed precongruence
inductive BarbPre': (g:ctx) [g |- proc] -> [g |- proc] -> ctype =
  | BC' : ((h:ctx) {$S : $[h |- g]} {R : [h |- proc]} -> BarbSim [h |- P[$S] p_par R] [h |- Q[$S] p_par R]) -> BarbPre' [g |- P] [g |- Q]
;

% Largest precongruence included in barbed similarity
coinductive LargPreCong: (g':ctx) [g' |- proc] -> [g' |- proc] -> ctype =
  | (LP_BarbSim : LargPreCong [g' |- P] [g' |- Q])
              :: BarbSim [g' |- P] [g' |- Q]
  | (LP_PreCong : LargPreCong [g' |- P] [g' |- Q])
              :: PCtx [g' ⊢ P] [g ⊢ CP] [g' ⊢ Q] [g ⊢ CQ] -> LargPreCong [g ⊢ CP] [g ⊢ CQ]
;


%%% OTHER IMPLICATION OF THE CONTEXT LEMMA

% Definition of natural numbers
LF nat : type =
  | z : nat
  | succ : nat -> nat
;

LF eqnat : nat -> nat -> type =
  | refnat: eqnat n n
;

LF leq : nat -> nat -> type =
  | samenat: leq n n
  | less: leq m n -> leq m (succ n)
;

% size of a context
inductive Dim : {g:ctx} [|- nat] -> ctype =
  | Dim_z: Dim [] [|- z]
  | Dim_succ: Dim [g] [|- n] -> Dim [g,x:names] [|- succ n]
;

% Multi_step_tau [|- n] [g |- P] [g |- Q] iff P evolves to Q in exactly n tau-transitions
inductive Multi_step_tau : (g:ctx) [|- nat] -> [g |- proc] -> [g |- proc] -> ctype =
  | Ms_tau_z: Multi_step_tau [|- z] [g |- P] [g |- P]
  | Ms_tau_succ: [g |- fstep P1 f_tau P2] -> Multi_step_tau [|- n] [g |- P2] [g |- P3] -> Multi_step_tau [|- succ n] [g |- P1] [g |- P3]
;

% prefix of a context
inductive Ctx_Pref: {g1:ctx} {g3:ctx} ctype =
  | Ctx_Pref_same: Ctx_Pref [g1] [g1]
  | Ctx_Pref_add: Ctx_Pref [g1] [g3] -> Ctx_Pref [g1] [g3,x:names]
;

% concatenation of contexts
inductive Ctx_App: {g1:ctx} {g2:ctx} {g3:ctx} ctype =
  | Ctx_App_empty: Ctx_App [g1] [] [g1]
  | Ctx_App_add: Ctx_App [g1] [g2] [g3] -> Ctx_App [g1] [g2,x:names] [g3,x:names]
;

% same name in a bigger context
inductive Name_Exp: (g2:ctx) (g3:ctx) [g2 |- names] -> [g3 |- names] -> ctype =
  | Name_Exp_same: Name_Exp [g2 |- X] [g2 |- X]
  | Name_Exp_add: Name_Exp [g2 |- X2] [g3 |- X3] -> Name_Exp [g2 |- X2] [g3,x:names |- X3[..]]
;

% concatenation of substitutions
inductive Subst_App: (g1:ctx) (g2:ctx) (g3:ctx) $[g1 |- g1] -> $[g2 |- g2] -> $[g3 |- g3] -> ctype =
  | Subst_App_empty: Subst_App $[g1 |- $S1] $[|- ^] $[g1 |- $S1]
  | Subst_App_add: Subst_App $[g1 |- $S1] $[g2 |- $S2] $[g3 |- $S3] 
    -> Name_Exp [g2,x:names |- X2] [g3,x:names |- X3] 
    -> Subst_App $[g1 |- $S1] $[g2,x:names |- $S2[..],X2] $[g3,x:names |- $S3[..],X3]
;

% expansion of a substitution
inductive Subst_Exp: (g1:ctx) (g3:ctx) $[g1 |- g1] -> $[g3 |- g3] -> ctype =
  | Subst_Exp_Def: Subst_App $[g1 |- $S1] $[g2 |- ..] $[g3 |- $S3] -> Subst_Exp $[g1 |- $S1] $[g3 |- $S3]
;

% Context built from substitution
%{inductive Ctx_pref_zero: (g':ctx) {g:ctx} $[g' |- g'] -> [g',x:names |- proc] -> ctype =
  | Ctx_pref_zero_empty: Ctx_pref_zero [] $[g' |- ..] [g',x:names |- p_zero]
  | Ctx_pref_zero_add: Partial_subst [g] [g'] Ctx_pref_zero [g] $[g' |- $S] [g',x:names |- P0] -> Ctx_pref_zero [g,x:names] $[g |- $S,Y] [g,x:names |- p_out x Y[..] P0]
;

inductive Ctx_pref_proc: (g:ctx) {g':ctx} [g |- proc] -> [g,x:names |- proc] -> ctype =
  | Ctx_pref_proc_empty: Ctx_pref_proc [] [g |- P] [g,x:names |- P[..]]
  | Ctx_pref_proc_add: Ctx_pref_proc [g'] [g |- P] [g,x:names |- P'] -> Ctx_pref_proc [g',x:names] [g |- P] [g,x:names |- p_in x \x. P'[..]]
;}%

%{inductive Ctx_par_subst: (g':ctx) {g:ctx} $[g |- g] -> [g' |- proc] -> [g' |- proc] -> [g',x:names |- proc] -> ctype =
  | Ctx_par_subst_empty: Ctx_par_subst [] $[ |- ^] [g' |- R] [g' |- P] [g',x:names |- (p_zero p_par P[..]) p_par R[..]]
  | Ctx_par_subst_add: Ctx_par_subst [g] $[g |- $S] [g',x:names |- R] [g',x:names |- P] [g',x:names,y:names |- (Po p_par Pi) p_par R[..]] -> Name_Exp [g,x:names |- X] [g',x:names |- X'] -> Ctx_par_subst [g,x:names] $[g,x:names |- $S[..],X] [g',x:names |- R] [g',x:names |- P] [g',x:names,y:names |- ((p_out x X'[..] Po) p_par (p_in y \x. Pi))]
;}%
\end{lstlisting}
\vspace{-0.5\baselineskip}
\caption{Excerpt from encoding of contexts.}
\label{fig:PCtx}
\end{figure}

Structural congruence can alternatively be defined as a contextual
equivalence via the inductive type {\ttfamily \small
  \color{belugagreen}Cong} in Fig.~\ref{fig:strcong}, that uses 
 the type {\ttfamily \small
  \color{belugagreen}PCtx} in the \emph{context closure} clause.~\clink{1\_definitions.bel}{76}{94} 


\begin{figure}[th]
\begin{lstlisting}[style=belugastyleframes,linerange={Structural_congruence-End},basicstyle=\footnotesize\ttfamily]
%%%%%%%%%%%%%%%%%%%%%%%%%%%%%%%%%%%%%%%%%%%%%%%%%%%%%%%%%

%%% Definitions %%%
% In order to encode input and restriction processes, which both bind names, we exploit higher order abstract syntax
% and use higher-order functions from names to proc.
% In this way we don't need to give an explicit name to bound names and deal with alpha-renaming or substitution.

%% Names_and_Processes %%
LF names: type =;

LF proc: type =
 | p_zero: proc                                
 | p_in: names → (names → proc) → proc         
 | p_out: names → names → proc → proc          
 | p_par: proc → proc → proc                   
 | p_res: (names → proc) → proc                
 | p_rep: proc → proc; ---infix p_par 11 left.

schema ctx = names;
%% End %%

% Equality of processes
LF eqp: proc → proc → type =
  | refp: eqp X X
;

% Equality of names
LF eqn: names → names → type =
  | refl: eqn N N
;

% empty type
LF not_possible : type = 
;



% Beniamino's trick - Predicate for contexts for a couple of terms
% Meta level
% PCtx [g' |- P] [g |- CP] [g' |- Q] [g |- CQ] holds if there is a context C such that C[P] = CP and C[Q] = CQ

%% Contexts %%
inductive PCtx: (g:ctx)(g':ctx)[g' ⊢ proc] → [g ⊢ proc] → [g' ⊢ proc] → [g ⊢ proc] → ctype =
  | pidM: PCtx [g ⊢ P] [g ⊢ P] [g ⊢ Q] [g ⊢ Q]
  | pinM: PCtx [g' ⊢ P] [g,x:names ⊢ CP] [g' ⊢ Q] [g,x:names ⊢ CQ] 
       → PCtx [g' ⊢ P] [g ⊢ p_in X (\x.CP)] [g' ⊢ Q] [g ⊢ p_in X (\x.CQ)]
  | poutM: PCtx [g' ⊢ P] [g ⊢ CP] [g' ⊢ Q] [g ⊢ CQ] 
        → PCtx [g' ⊢ P] [g ⊢ p_out X Y CP] [g' ⊢ Q] [g ⊢ p_out X Y CQ]
  ...								 
%% End %%

% Structural congruence
LF cong: proc → proc → type =
% Abelian Monoid Laws for Parallel Composition
  | par_assoc: cong (P p_par (Q p_par R)) ((P p_par Q) p_par R)
  | par_unit: cong (P p_par p_zero) P
  | par_comm: cong (P p_par Q) (Q p_par P)
% Scope Extension Laws
  | sc_ext_zero: cong (p_res \x.p_zero) p_zero
  | sc_ext_par: cong ((p_res P) p_par Q) (p_res \x.((P x) p_par Q))
  | sc_ext_res: cong (p_res \x.(p_res \y.(P x y))) (p_res \y.(p_res \x.(P x y)))
% Replication unfolding
  | rep_unfold: cong (p_rep P) (P p_par (p_rep P))
% Compatibility Laws
  | c_in: ({y:names} cong (P y) (Q y)) → cong (p_in X P) (p_in X Q)
  | c_out: cong P Q → cong (p_out X Y P) (p_out X Y Q)
  | c_par: cong P P' → cong (P p_par Q) (P' p_par Q)
  | c_res: ({x:names} cong (P x) (Q x)) → cong (p_res P) (p_res Q)
  | c_rep: cong P Q -> cong (p_rep P) (p_rep Q)
% Equivalence Relation Laws
  | c_ref: cong P P
  | c_sym: cong P Q → cong Q P
  | c_trans: cong P Q → cong Q R → cong P R
;
--infix cong 11 left.

%% Structural_congruence %%
inductive Cong: (g:ctx) [g ⊢ proc] → [g ⊢ proc] → ctype =
  ...
% Replication unfolding
  | Rep_unfold: Cong [g ⊢ (p_rep P)] [g ⊢ (P p_par (p_rep P))]
% Context closure
  | C_ctx: Cong [g' ⊢ P] [g' ⊢ Q] → PCtx [g' ⊢ P] [g ⊢ CP] [g' ⊢ Q] [g ⊢ CQ] 
        → Cong [g ⊢ CP] [g ⊢ CQ]
% Equivalence Relation Laws
  | C_ref: Cong [g ⊢ P] [g ⊢ P]
  | C_sym: Cong [g ⊢ P] [g ⊢ Q] → Cong [g ⊢ Q] [g ⊢ P]
  | C_trans: Cong [g ⊢ P] [g ⊢ Q] → Cong [g ⊢ Q] [g ⊢ R] → Cong [g ⊢ P] [g ⊢ R]
;
%% End %%

%%%%%%%%%%%%%%%%%%%%%%%%%%%%%%%%%%%%%%%%%%%%%%%%%%%%%%%%%

%%% LTS Semantics %%%
% We follow "Pi-Calculus in (Co)Inductive Type Theory" [Honsell et al. 2001] for the encoding of late LTS semantics.
% We define two different types for free/bound actions, and two different relations for transitions via free/bound actions.
% The result of a free transition is a process, while the result of a bound transition is a function from names to processes:
% instead of stating the bound name involved in the transition explicitly, that name is the argument of the aforementioned function.

% Free Actions
LF f_act: type =
  | f_tau: f_act
  | f_out: names → names → f_act
;

% Bound Actions
LF b_act: type =
  | b_in: names → b_act
  | b_out: names → b_act
;

% Equality of actions
LF eqf: f_act → f_act → type =
  | reff: eqf A A
;

LF eqb: b_act → b_act → type =
  | refb: eqb A A
;

% Transition Relation
LF fstep: proc → f_act → proc → type =
  | fs_out: fstep (p_out X Y P) (f_out X Y) P
  | fs_par1: fstep P A P' → fstep (P p_par Q) A (P' p_par Q)
  | fs_par2: fstep Q A Q' → fstep (P p_par Q) A (P p_par Q')
  | fs_com1: fstep P (f_out X Y) P' → bstep Q (b_in X) Q'
    → fstep (P p_par Q) f_tau (P' p_par (Q' Y))
  | fs_com2: bstep P (b_in X) P' → fstep Q (f_out X Y) Q'
    → fstep (P p_par Q) f_tau ((P' Y) p_par Q')
  | fs_res: ({z:names} fstep (P z) A (P' z))
    → fstep (p_res P) A (p_res P')
  | fs_close1: bstep P (b_out X) P' → bstep Q (b_in X) Q'
    → fstep (P p_par Q) f_tau (p_res \z.((P' z) p_par (Q' z)))
  | fs_close2: bstep P (b_in X) P' → bstep Q (b_out X) Q'
    → fstep (P p_par Q) f_tau (p_res \z.((P' z) p_par (Q' z)))
  | fs_rep: fstep P A P' → fstep (p_rep P) A (P' p_par (p_rep P))
  | fs_rep_com: fstep P (f_out X Y) P1 → bstep P (b_in X) P2
    → fstep (p_rep P) f_tau ((P1 p_par (P2 Y)) p_par (p_rep P))
  | fs_rep_close: bstep P (b_out X) P1 → bstep P (b_in X) P2
    → fstep (p_rep P) f_tau ((p_res \z.((P1 z) p_par (P2 z))) p_par (p_rep P))

and bstep: proc → b_act → (names → proc) → type =
  | bs_in: bstep (p_in X P) (b_in X) P
  | bs_par1: bstep P A P' → bstep (P p_par Q) A \x.((P' x) p_par Q)
  | bs_par2: bstep Q A Q' → bstep (P p_par Q) A \x.(P p_par (Q' x))
  | bs_res: ({z:names} bstep (P z) A (P' z))
    → bstep (p_res P) A \x.(p_res \z.(P' z x))
  | bs_open: ({z:names} fstep (P z) (f_out X z) (P' z))
    → bstep (p_res P) (b_out X) P'
  | bs_rep: bstep P A P' → bstep (p_rep P) A \x.((P' x) p_par (p_rep P))
;


%%% Barbed similarity %%%

inductive Barb_in : (g:ctx) [g |- proc] -> [g |- names] -> ctype =
  | Barb_in_b : [g |- bstep P (b_in X) (\x. P')] -> Barb_in [g |- P] [g |- X]
;

inductive Barb_out : (g:ctx) [g |- proc] -> [g |- names] -> ctype =
  | Barb_out_f : [g |- fstep P (f_out X Y) P'] -> Barb_out [g |- P] [g |- X]
  | Barb_out_b : [g |- bstep P (b_out X) (\x. P')] -> Barb_out [g |- P] [g |- X]
;

LF barb_in_rew : proc -> names -> type =
  | biw_zero : barb_in_rew ((p_in X Q) p_par R) X
  | biw_add: ({x : names} barb_in_rew (P x) X) -> barb_in_rew (p_res P) X
;

LF barb_in_red : proc -> names -> type =
  | barb_in_red_def: P cong P' -> barb_in_rew P' X -> barb_in_red P X
;

LF barb_out_rew : proc -> names -> type =
  | bow_zero : barb_out_rew ((p_out X Y Q) p_par R) X
  | bow_add: ({x : names} barb_out_rew (P x) X) -> barb_out_rew (p_res P) X
;

LF barb_out_red : proc -> names -> type =
  | barb_out_red_def: P cong P' -> barb_out_rew P' X -> barb_out_red P X
;

%% BarbSim %%
coinductive BarbSim : (g:ctx) [g ⊢ proc] → [g ⊢ proc] → ctype =
  | (BarbSim_barb_in : BarbSim [g ⊢ P] [g ⊢ Q])
              :: Barb_in [g ⊢ P] [g ⊢ X] → Barb_in [g ⊢ Q] [g ⊢ X]
  | (BarbSim_barb_out : BarbSim [g ⊢ P] [g ⊢ Q])
              :: Barb_out [g ⊢ P] [g ⊢ X] → Barb_out [g ⊢ Q] [g ⊢ X]
  | (BarbSim_tau : BarbSim [g ⊢ P] [g ⊢ Q])
              :: [g ⊢ fstep P f_tau P'] → Ex_sim_barb [g ⊢ P] [g ⊢ Q] [g ⊢ P']

and inductive Ex_sim_barb : (g:ctx) [g ⊢ proc] → [g ⊢ proc] → [g ⊢ proc] → ctype =
  | Ex_sim_barb_def : [g ⊢ fstep Q f_tau Q'] → BarbSim [g ⊢ P'] [g ⊢ Q'] 
                   → Ex_sim_barb [g ⊢ P] [g ⊢ Q] [g ⊢ P']
;
%% End %%

%%% Barbed similarity up to%%%

coinductive BarbSimUpTo : (g:ctx) [g |- proc] -> [g |- proc] -> ctype =
  | (BarbSimUpTo_barb_in : BarbSimUpTo [g |- P] [g |- Q])
              :: Barb_in [g |- P] [g |- X] -> Barb_in [g |- Q] [g |- X]
  | (BarbSimUpTo_barb_out : BarbSimUpTo [g |- P] [g |- Q])
              :: Barb_out [g |- P] [g |- X] -> Barb_out [g |- Q] [g |- X]
  | (BarbSimUpTo_f_tau : BarbSimUpTo [g |- S1] [g |- S2])
              :: [g |- fstep S1 f_tau S1'] -> Ex_sim_barb_up_to [g |- S1] [g |- S2] [g |- S1']

and inductive Ex_sim_barb_up_to : (g:ctx) [g |- proc] -> [g |- proc] -> [g |- proc] -> ctype =
  | Ex_sim_barb_up_to_def : [g |- fstep Q f_tau Q'] -> BarbSim [g |- P'] [g |- P''] -> BarbSimUpTo [g |- P''] [g |- Q''] -> BarbSim [g |- Q''] [g |- Q'] 
                 -> Ex_sim_barb_up_to [g |- P] [g |- Q] [g |- P']
;

%%% Barbed precongruence %%%
inductive BarbPre: (g':ctx) [g' |- proc] -> [g' |- proc] -> ctype =
  | BC:  (g':ctx) {P:[g' |- proc]} {Q:[g' |- proc]} 
    ((g : ctx) {CP:[g |- proc]} {CQ:[g |- proc]} PCtx [g' ⊢ P] [g ⊢ CP] [g' ⊢ Q] [g ⊢ CQ] -> BarbSim [g |- CP] [g |- CQ]) 
    -> BarbPre [g' |- P] [g' |- Q]
;

%%% Characterizations of barbed precongruence

% Context lemma characterization of barbed precongruence
inductive BarbPre': (g:ctx) [g |- proc] -> [g |- proc] -> ctype =
  | BC' : ((h:ctx) {$S : $[h |- g]} {R : [h |- proc]} -> BarbSim [h |- P[$S] p_par R] [h |- Q[$S] p_par R]) -> BarbPre' [g |- P] [g |- Q]
;

% Largest precongruence included in barbed similarity
coinductive LargPreCong: (g':ctx) [g' |- proc] -> [g' |- proc] -> ctype =
  | (LP_BarbSim : LargPreCong [g' |- P] [g' |- Q])
              :: BarbSim [g' |- P] [g' |- Q]
  | (LP_PreCong : LargPreCong [g' |- P] [g' |- Q])
              :: PCtx [g' ⊢ P] [g ⊢ CP] [g' ⊢ Q] [g ⊢ CQ] -> LargPreCong [g ⊢ CP] [g ⊢ CQ]
;


%%% OTHER IMPLICATION OF THE CONTEXT LEMMA

% Definition of natural numbers
LF nat : type =
  | z : nat
  | succ : nat -> nat
;

LF eqnat : nat -> nat -> type =
  | refnat: eqnat n n
;

LF leq : nat -> nat -> type =
  | samenat: leq n n
  | less: leq m n -> leq m (succ n)
;

% size of a context
inductive Dim : {g:ctx} [|- nat] -> ctype =
  | Dim_z: Dim [] [|- z]
  | Dim_succ: Dim [g] [|- n] -> Dim [g,x:names] [|- succ n]
;

% Multi_step_tau [|- n] [g |- P] [g |- Q] iff P evolves to Q in exactly n tau-transitions
inductive Multi_step_tau : (g:ctx) [|- nat] -> [g |- proc] -> [g |- proc] -> ctype =
  | Ms_tau_z: Multi_step_tau [|- z] [g |- P] [g |- P]
  | Ms_tau_succ: [g |- fstep P1 f_tau P2] -> Multi_step_tau [|- n] [g |- P2] [g |- P3] -> Multi_step_tau [|- succ n] [g |- P1] [g |- P3]
;

% prefix of a context
inductive Ctx_Pref: {g1:ctx} {g3:ctx} ctype =
  | Ctx_Pref_same: Ctx_Pref [g1] [g1]
  | Ctx_Pref_add: Ctx_Pref [g1] [g3] -> Ctx_Pref [g1] [g3,x:names]
;

% concatenation of contexts
inductive Ctx_App: {g1:ctx} {g2:ctx} {g3:ctx} ctype =
  | Ctx_App_empty: Ctx_App [g1] [] [g1]
  | Ctx_App_add: Ctx_App [g1] [g2] [g3] -> Ctx_App [g1] [g2,x:names] [g3,x:names]
;

% same name in a bigger context
inductive Name_Exp: (g2:ctx) (g3:ctx) [g2 |- names] -> [g3 |- names] -> ctype =
  | Name_Exp_same: Name_Exp [g2 |- X] [g2 |- X]
  | Name_Exp_add: Name_Exp [g2 |- X2] [g3 |- X3] -> Name_Exp [g2 |- X2] [g3,x:names |- X3[..]]
;

% concatenation of substitutions
inductive Subst_App: (g1:ctx) (g2:ctx) (g3:ctx) $[g1 |- g1] -> $[g2 |- g2] -> $[g3 |- g3] -> ctype =
  | Subst_App_empty: Subst_App $[g1 |- $S1] $[|- ^] $[g1 |- $S1]
  | Subst_App_add: Subst_App $[g1 |- $S1] $[g2 |- $S2] $[g3 |- $S3] 
    -> Name_Exp [g2,x:names |- X2] [g3,x:names |- X3] 
    -> Subst_App $[g1 |- $S1] $[g2,x:names |- $S2[..],X2] $[g3,x:names |- $S3[..],X3]
;

% expansion of a substitution
inductive Subst_Exp: (g1:ctx) (g3:ctx) $[g1 |- g1] -> $[g3 |- g3] -> ctype =
  | Subst_Exp_Def: Subst_App $[g1 |- $S1] $[g2 |- ..] $[g3 |- $S3] -> Subst_Exp $[g1 |- $S1] $[g3 |- $S3]
;

% Context built from substitution
%{inductive Ctx_pref_zero: (g':ctx) {g:ctx} $[g' |- g'] -> [g',x:names |- proc] -> ctype =
  | Ctx_pref_zero_empty: Ctx_pref_zero [] $[g' |- ..] [g',x:names |- p_zero]
  | Ctx_pref_zero_add: Partial_subst [g] [g'] Ctx_pref_zero [g] $[g' |- $S] [g',x:names |- P0] -> Ctx_pref_zero [g,x:names] $[g |- $S,Y] [g,x:names |- p_out x Y[..] P0]
;

inductive Ctx_pref_proc: (g:ctx) {g':ctx} [g |- proc] -> [g,x:names |- proc] -> ctype =
  | Ctx_pref_proc_empty: Ctx_pref_proc [] [g |- P] [g,x:names |- P[..]]
  | Ctx_pref_proc_add: Ctx_pref_proc [g'] [g |- P] [g,x:names |- P'] -> Ctx_pref_proc [g',x:names] [g |- P] [g,x:names |- p_in x \x. P'[..]]
;}%

%{inductive Ctx_par_subst: (g':ctx) {g:ctx} $[g |- g] -> [g' |- proc] -> [g' |- proc] -> [g',x:names |- proc] -> ctype =
  | Ctx_par_subst_empty: Ctx_par_subst [] $[ |- ^] [g' |- R] [g' |- P] [g',x:names |- (p_zero p_par P[..]) p_par R[..]]
  | Ctx_par_subst_add: Ctx_par_subst [g] $[g |- $S] [g',x:names |- R] [g',x:names |- P] [g',x:names,y:names |- (Po p_par Pi) p_par R[..]] -> Name_Exp [g,x:names |- X] [g',x:names |- X'] -> Ctx_par_subst [g,x:names] $[g,x:names |- $S[..],X] [g',x:names |- R] [g',x:names |- P] [g',x:names,y:names |- ((p_out x X'[..] Po) p_par (p_in y \x. Pi))]
;}%
\end{lstlisting}
\vspace{-0.5\baselineskip}
\caption{Excerpt from the inductive definition of structural congruence.}
\label{fig:strcong}
\end{figure}

\subsection{Preliminary Lemmas}
We start by establishing some basic properties of the LTS and of
process contexts. The former ones can be found
in~\cite{SWBook}, modulo our choice of late
semantics, while the latter ones, adapted from~\cite{LancelotAV25},
stem from the implementation of process contexts.

\begin{lemma}[Properties of the LTS]\label{lts}\leavevmode
  \begin{enumerate}[label=\arabic*., ref=\thelemma.\arabic*]
  \item\label{lts:one} If $P \xrightarrow{\overline{x}y} P'$ and $z$ is a name, then
either $z=y$ and
    $\nu z\, P \xrightarrow{\overline{x}(z)} P'$ or
    $\nu z\, P \xrightarrow{\overline{x}y} \nu z\, P'$.~\clink{2\_lts.bel}{3}{17}
    \item\label{lts:two} If $S \mid !P \xrightarrow{\tau} R$, then there is $Q$ such that $S \mid (P \mid P) \xrightarrow{ \tau} Q$ and $R \equiv Q \mid !P$.~\clink{2\_lts.bel}{20}{86}
    \item\label{lts:three} If $x\not\in\mathrm{fn}(P)$ and $P \xrightarrow{\alpha} P'$,
      then $x\not\in(\mathrm{fn}(\alpha)\cup\mathtt{fn}(P'))$.~\clink{6\_structcong\_in\_barbsim.bel}{3}{190}
    \item\label{strength} If $P \equiv Q$, then
    \begin{itemize}
      \item if $P \xrightarrow{\alpha} P'$, then there is $Q'$ such that $Q \xrightarrow{\alpha} Q'$ and $P' \equiv Q'$;
      \item if $Q \xrightarrow{\alpha} Q'$, then there is $P'$ such that $P \xrightarrow{\alpha} P'$ and $P' \equiv Q'$.~\clink{6\_structcong\_in\_barbsim.bel}{193}{720}
    \end{itemize}
  \end{enumerate}
\end{lemma}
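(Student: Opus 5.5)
All four items are proved by induction on derivations. Items 1 and 3 are essentially first-order facts about the LTS; item 2 is a case analysis on how $S \mid !P$ can perform a $\tau$; item 4 is the substantial one. I would prove them in the order 1, 3, 2, 4, since 4 will reuse 1 and 3.

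For \ref{lts:one}, I would unfold the HOAS reading: $\nu z\,P$ is \texttt{p\_res} applied to $\lambda z.P$, so $P$ is parametric in $z$. The case split is on whether the object name of the free output depends on the bound variable. If it is $z$, rule \texttt{bs\_open} yields $\nu z\,P \xrightarrow{\overline{x}(z)} P'$. Otherwise the name is closed with respect to $z$, and \texttt{fs\_res} yields $\nu z\,P \xrightarrow{\overline{x}y}\nu z\,P'$. In Beluga this is a pattern match on the object of the action in context $g,z$ (is it the variable $z$, or does it live in $g$?), followed by the corresponding rule.

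For \ref{lts:three}, freshness is expressed by $P$ living in a context without $x$, i.e.\ $P = P_0[\ldots]$ weakened. I would prove by mutual induction on \texttt{fstep}/\texttt{bstep} that any transition of a weakened process is itself the weakening of a transition of $P_0$: both the action and the residual are strengthened. Each rule is routine; the binder cases (\texttt{fs\_res}, \texttt{bs\_res}, \texttt{bs\_open}, close) need the context extended with the bound name. The size of the cited Beluga file suggests this is tedious but not conceptually hard.

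For \ref{lts:two}, I would invert the $\tau$-step of $S \mid !P$:
\begin{itemize}
\item \texttt{fs\_par1}: a move of $S$ is matched directly.
\item \texttt{fs\_par2} with \texttt{fs\_rep}: a single copy of $P$ moves, matched by the left copy of $P\mid P$.
\item \texttt{fs\_rep\_com} or \texttt{fs\_rep\_close}: two copies interact, matched by com/close inside $P\mid P$.
\item Communication between $S$ and $!P$ (\texttt{fs\_com1/2}, \texttt{fs\_close1/2}, using \texttt{bs\_rep} or \texttt{fs\_rep}): matched by $S$ talking to one copy of $P$.
\end{itemize}
In every case the residuals agree up to reshuffling with associativity and commutativity of $\mid$. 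The close cases additionally need scope extension, which requires the freshness fact \ref{lts:three} to move $!P$ under the restriction.

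Item \ref{strength} is the main obstacle. I would prove both directions simultaneously by induction on the \texttt{cong} derivation, generalising the statement to cover free and bound transitions; for bound ones, $P'$ and $Q'$ are abstractions and are congruent pointwise. Symmetry swaps the two directions, transitivity composes, and the compatibility rules go by inversion on the transition plus induction. The axioms require inverting every possible transition. The hard cases are:
\begin{itemize}
\item \texttt{sc\_ext\_par}: with $Q$ fresh, a communication between $(\nu x)P$ and $Q$ must be rebuilt under the binder, and vice versa. Here \ref{lts:one} decides between open and res, and \ref{lts:three} guarantees that $Q$'s actions avoid $x$.
\item \texttt{par\_assoc}: a communication crossing the nesting has to be re-associated, possibly turning a close into a close nested under a restriction, which in turn requires scope extension.
\item \texttt{rep\_unfold}: \texttt{fs\_rep\_com} on $!P$ must be matched by $P \mid !P$ via a communication of $P$ with the \texttt{fs\_rep} copy, up to associativity. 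In the converse direction, a step of $P \mid !P$ in which $P$ talks to $!P$ is matched by \texttt{fs\_rep\_com}/\texttt{fs\_rep\_close}. This is exactly where the extra replication rules are needed.
\end{itemize}
I would first factor out auxiliary inversion lemmas for parallel composition and restriction, so that each axiom case becomes a short table of rule combinations.
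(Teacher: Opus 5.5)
Your proposal matches the paper's proof: items 1--2 by case analysis on the given transition, item 3 by mutual induction on \texttt{fstep}/\texttt{bstep}, and item 4 by mutual induction on the \texttt{cong} derivation covering both directions and both free and bound actions, where your use of items 1 and 3 in the \texttt{sc\_ext\_par} case is exactly what is needed. Three small corrections. Item 1 tacitly needs the subject $x$ to be distinct from $z$, since otherwise neither \texttt{fs\_res} nor \texttt{bs\_open} applies. In item 2 the residuals match up to associativity, commutativity \emph{and} \textsc{Rep-Unfold}, which is needed to absorb the idle copies of $P$ into $!P$. Item 3 is not needed in item 2, because \texttt{bs\_rep} already keeps $!P$ outside the binder, so scope extension applies directly.
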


\begin{proof} (Sketch)
  The first two statements are proven by a straightforward induction on the derivation of the transition in the hypothesis. The third  one is split into two lemmas, separating free and bound actions, which are proven by a mutual induction on the derivation of the transition in the hypothesis. The last one is split into four lemmas (it requires to prove two different theses and in both we need to distinguish free and bound actions), which are proven by a mutual induction on the derivation of the congruence in the hypothesis. We note
  that the third and the fourth results are extensions of those 
  in~\cite{Cecilia24}. 
\end{proof}
\smallskip

Now for properties of process contexts:  some of
them would be obvious with the on-paper definition of contexts, but are
not trivial in the {\ttfamily \small \color{belugagreen}PCtx}
formalization.  We recall that $((P,C_P),(Q,C_Q))$ means that there
is a context $C$ such that $C[P] = C_P$ and $C[Q] = C_Q$.

\begin{lemma}[Properties of contexts]\leavevmode
  \begin{enumerate}
    \item (Symmetry) If $((P,C_P),(Q,C_Q))$, then $((Q,C_Q),(P,C_P))$.~\clink{4\_contexts.bel}{4}{15}
    \item (Transitivity) If $((P,C_P),(R,C_R))$, then for all $Q$ there is $C_Q$ such that $((P,C_P),(Q,C_Q))$ and \allowbreak\ \mbox{$((Q,C_Q),(R,C_R))$}.~\clink{4\_contexts.bel}{19}{38} 
    \item (Functionality) If $((P,C_P),(P,C_P'))$, then $C_P = C_P'$.~\clink{4\_contexts.bel}{41}{50}
    \item (Context composition) If $((P,C_P),(Q,C_Q))$ and $((C_P,C_{C_P}),(C_Q,C_{C_Q}))$, then $((P,C_{C_P}), (Q,C_{C_Q}))$.~\clink{4\_contexts.bel}{54}{64}
  \end{enumerate}
\end{lemma}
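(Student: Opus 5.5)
All four properties are proved by structural induction on the given \texttt{PCtx} derivation, that is, on the structure of the context $C$. In each case we rebuild a \texttt{PCtx} derivation with the same constructor, possibly after one or two inductive calls. The only real work is bookkeeping of the two LF contexts $g'$ (for the plugged processes) and $g$ (for the filled contexts), where $g'$ extends $g$ by the names bound by $C$ along the path to the hole.

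\emph{Symmetry.} We induct on $((P,C_P),(Q,C_Q))$.
\begin{itemize}
\item For \texttt{pidM} we return \texttt{pidM}.
\item For each congruence constructor (\texttt{pinM}, \texttt{poutM}, the parallel, restriction and replication cases) we recurse on the premise, whose $g$ may be extended by one name in the binder cases. We then reapply the same constructor with the roles of the two processes swapped.
\end{itemize}
Because the constructors treat the second and fourth indices uniformly, this is immediate.

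\emph{Transitivity.} Given $((P,C_P),(R,C_R))$ and an arbitrary $Q$ in $g'$, we construct $C_Q$ and both derivations simultaneously by induction.
\begin{itemize}
\item In the \texttt{pidM} case, $g=g'$ and we take $C_Q=Q$, with \texttt{pidM} twice.
\item In a congruence case, e.g.\ \texttt{pinM}, the inductive hypothesis on the premise (in context $g,x$) yields $CQ$ in $g,x$ together with the two derivations. We set $C_Q = x(y).CQ$ with the same channel $X$ and apply \texttt{pinM} to both.
\end{itemize}
The statement must be phrased with an existential output type, an inductive record packaging $C_Q$ and the two \texttt{PCtx} proofs, so that Beluga can return it.

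\emph{Functionality.} We induct on $((P,C_P),(P,C_P'))$.
\begin{itemize}
\item In the \texttt{pidM} case both filled terms equal $P$.
\item In a congruence case, the inductive hypothesis gives equality of the subterms. We then conclude by congruence of the LF constructor, returning an \texttt{eqp} witness (\texttt{refp}) after matching.
\end{itemize}
One subtlety is that both indices share the same shape of context, so the constructors match pairwise; this holds because \texttt{PCtx} fixes the outer constructor for both filled terms.

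\emph{Composition.} We induct on the \emph{second} derivation $((C_P,C_{C_P}),(C_Q,C_{C_Q}))$, keeping the first fixed.
\begin{itemize}
\item If it is \texttt{pidM}, the result is the first derivation.
\item Otherwise we recurse on its premise, where the intermediate context is $g_1,x$, and reapply the outer constructor.
\end{itemize}
The main obstacle is here: the three LF contexts $g' \supseteq g_1 \supseteq g$ must be threaded correctly in the type of the lemma. The recursive call has to be stated for an arbitrary outer context, so that the binder cases, which extend it, are still instances of the inductive hypothesis. I would generalize the statement over all three contexts as implicit context variables and check that Beluga's coverage accepts the \texttt{pidM} case, where $g_1=g$ must unify.
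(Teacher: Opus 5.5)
Your proposal is correct and matches the paper, which proves all four items by induction on the context derivation in the hypothesis; for composition, your choice of the second (outer) derivation $((C_P,C_{C_P}),(C_Q,C_{C_Q}))$, returning the first derivation in the \texttt{pidM} case, is the one that makes the induction go through directly. One small slip: in the binder cases of composition, the premise extends the outer context $g$ (the context of $C_{C_P}$) to $g,x$, while $g_1$ stays fixed, not $g_1$ to $g_1,x$ as you wrote; your own remark about generalizing over an arbitrary outer context already says this.
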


\begin{proof}
  All proofs are by induction on the derivation of the context relation in the hypothesis.
\end{proof}

\smallskip
As is well known~\cite{Lassen98}, a binary relation on processes constitutes a contextual equivalence if and only if it is a compatible equivalence. Because Beluga’s logic lacks support for higher-order predicates, this meta-theorem cannot be generalized; rather, the equivalence must be formalized independently for each specific relation.

\begin{lemma}
  Structural congruence can be characterized either via compatibility~\clink{3\_cong\_equiv.bel}{4}{22}
  or by context
  closure.~\clink{3\_cong\_equiv.bel}{38}{53}
\end{lemma}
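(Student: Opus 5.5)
We prove the two directions separately: every derivation of \texttt{cong} yields one of \texttt{Cong}, and conversely. Since \texttt{cong} is an LF type with an implicit context while \texttt{Cong} is a computation-level type indexed by an explicit context $g$, we phrase the statement as two functions. The first has type \texttt{[g |- cong P Q] -> Cong [g |- P] [g |- Q]}; the second has the converse type.

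For the direction from compatibility to context closure, we proceed by induction on the LF derivation of $P \equiv Q$. The axioms (Par-Assoc, Par-Unit, Par-Comm, the scope extension laws, Rep-Unfold) and the equivalence laws map directly to their counterparts in \texttt{Cong}.

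Each compatibility rule is handled with the context-closure clause \texttt{C\_ctx}, building a one-layer \texttt{PCtx} derivation out of \texttt{pidM}. For example, for C-In we have a premise \texttt{\{y:names\} cong (P y) (Q y)}. Instantiating $y$ by the context variable gives \texttt{[g,y:names |- cong P Q]}. The induction hypothesis turns this into \texttt{Cong [g,y |- P] [g,y |- Q]}. We then apply \texttt{C\_ctx} with \texttt{pinM pidM}, which relates the pair to \texttt{p\_in X (\textbackslash y.P)} and \texttt{p\_in X (\textbackslash y.Q)}. Restriction, output and replication are analogous.

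Parallel composition in \texttt{cong} only has the left rule \texttt{c\_par}, so it needs only the left \texttt{PCtx} constructor for \texttt{p\_par}. If \texttt{PCtx} lacks a right-parallel constructor, we instead obtain the right case by combining with Par-Comm.

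For the converse, from context closure to compatibility, we again use induction on the \texttt{Cong} derivation. All cases are immediate except \texttt{C\_ctx}, for which we need an auxiliary lemma. The lemma states that if \texttt{[g' |- cong P Q]} and \texttt{PCtx [g' |- P] [g |- CP] [g' |- Q] [g |- CQ]}, then \texttt{[g |- cong CP CQ]}. It is proved by induction on the \texttt{PCtx} derivation:
\begin{itemize}
\item for \texttt{pidM}, the contexts coincide and we return the hypothesis;
\item for \texttt{pinM}, the induction hypothesis gives \texttt{[g,x |- cong CP CQ]}, which we abstract into \texttt{c\_in \textbackslash x.\ldots};
\item the remaining constructors are handled similarly with \texttt{c\_out}, \texttt{c\_res} and \texttt{c\_rep};
\item for the parallel cases we use \texttt{c\_par}, together with \texttt{par\_comm} and \texttt{c\_trans} for the right-hand position.
\end{itemize}

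The main obstacle is the context bookkeeping in Beluga. The \texttt{PCtx} index \texttt{g'} is an extension of \texttt{g}, and the inner congruence lives in \texttt{g'}. The induction must therefore be stated with \texttt{g'} fixed and \texttt{g} varying, and the hypothesis \texttt{[g' |- cong P Q]} must be carried unchanged to the base case, where \texttt{pidM} forces \texttt{g = g'}. We expect this to work once the lemma is generalized over \texttt{g}, exactly as \texttt{PCtx} is indexed. The termination checker should accept it, since each recursive call is on a structurally smaller \texttt{PCtx} derivation.
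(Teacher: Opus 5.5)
Your proposal is correct and takes the same route as the paper: each inclusion is proved by induction on the respective derivation, with the compatibility rules of \texttt{cong} discharged via \texttt{C\_ctx} and one-layer \texttt{PCtx} derivations. The \texttt{C\_ctx} case of the converse uses the auxiliary fact that \texttt{cong} is closed under \texttt{PCtx}, proved by induction on the context derivation with \texttt{g'} fixed, which is exactly the preliminary lemma the paper proves to handle the context-closure clause.
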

\begin{proof}
  (Sketch) Both inclusions are proved by induction on the derivation of the congruence rule in the hypothesis. The inclusion of the compatible relation in the contextual one first requires to  prove that the compatible relation is contextually closed too.~\clink{3\_cong\_equiv.bel}{25}{35}
\end{proof}

The LF definition is more convenient for discussing the properties of the LTS, as in~\cite{Cecilia24}, while the inductive one is used later to show the inclusion of structural congruence in another precongruence defined via contexts.


\section{Behavioral Equivalences}
\label{sec:beh}

Intuitively, two processes can be considered equivalent when they exhibit the same behavior. However, the level of detail at which behaviors can be distinguished depends on the chosen observational granularity; this is where a variety of bisimilarity notions come into play. In CCFB3, the authors focus on \emph{strong barbed bisimilarity}, 
which relates processes that can match internal ($\tau$) transitions and preserve the same \emph{barbs}, i.e.\ the ability to perform input or output on a given channel.


More precisely, the barb, or observability predicate, $P \downarrow_x$ (resp.\ $P \downarrow_{\overline{x}}$) 
holds if a process $P$ can perform an input action with subject $x$ (resp.\ an output action with subject $\overline{x}$). Following~\cite{SWBook},
this notion can be formalized in two equivalent ways, based either on
the structure of the process $P$ or on the (late) LTS:

\begin{enumerate}[label=\roman*)]
    \item\label{def:barb_one} $P \downarrow_x$ iff $P \equiv \nu z_1 \dots \nu z_n (x(y).Q \mid R)$ for some $y,z_1,\dots,z_n,Q,R$, with $x \notin \{z_1,\dots,z_n\}$. Analogously, $P \downarrow_{\overline{x}}$ iff $P \equiv \nu z_1 \dots \nu z_n (\overline{x}y.Q \mid R)$ for some $y,z_1,\dots,z_n,Q,R$, with $x \notin \{z_1,\dots,z_n\}$.
    
    The telescopes, i.e.\ $n$-ary sequences of binders, appearing in this definition of barb are encoded in Beluga by two types {\ttfamily \small \color{belugagreen}barb\_in\_rew} and {\ttfamily \small \color{belugagreen}barb\_out\_rew}, that build the sequence of restrictions incrementally.~\clink{1\_definitions.bel}{171}{179} Then, the two types encoding barbs identify processes congruent to such telescopes.~\clink{1\_definitions.bel}{181}{187}
    
    \item\label{def:barb_two} $P \downarrow_x$ iff $P \xrightarrow{x(z)} P'$ for some $z,P'$, and similarly $P \downarrow_{\overline{x}}$ iff $P \xrightarrow{\overline{x}y} P'$ or $P \xrightarrow{\overline{x}(z)} P'$ for some $y,z,P'$.
    
    We formalize  this at the meta level, but it could have been formulated as an LF type as easily. Since the LTS presents one input label and two output labels, {\ttfamily \small \color{belugagreen}Barb\_in} has one constructor, while {\ttfamily \small \color{belugagreen}Barb\_out} has two.~\clink{1\_definitions.bel}{161}{168}
\end{enumerate}



Definitions~\ref{def:barb_one} and~\ref{def:barb_two} are equivalent, in the sense given by the following lemma. For brevity, we omit some auxiliary technical lemmas and refer to the repository for full details.~\clink{7\_barbs\_equiv.bel}{6}{14}~\clink{7\_barbs\_equiv.bel}{24}{37}~\clink{7\_barbs\_equiv.bel}{71}{84}

\begin{lemma}\label{le:barb_equiv}
  Let $P$ be a process, $x$ be a name.
  The following are equivalent:
  \begin{enumerate}
    \item $P \equiv \nu z_1 \dots \nu z_n (x(y).Q \mid R)$ for some $y,z_1,\dots,z_n,Q,R$, with $x \notin \{z_1,\dots,z_n\}$;
    \item $P \xrightarrow{x(z)} P'$ for some $z,P'$.~\clink{7\_barbs\_equiv.bel}{17}{21}~\clink{7\_barbs\_equiv.bel}{40}{55}
  \end{enumerate}

  Similarly, the following are equivalent:
  \begin{enumerate}
    \item $P \equiv \nu z_1 \dots \nu z_n (\overline{x}y.Q \mid R)$ for some $y,z_1,\dots,z_n,Q,R$, with $x \notin \{z_1,\dots,z_n\}$;
    \item $P \xrightarrow{\overline{x}y} P'$ or $P \xrightarrow{\overline{x}(z)} P'$ for some $y,z,P'$.~\clink{7\_barbs\_equiv.bel}{58}{68}~\clink{7\_barbs\_equiv.bel}{87}{120}
  \end{enumerate}
  
\end{lemma}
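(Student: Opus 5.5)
We prove each equivalence by establishing the two implications separately; input and output are handled by the same scheme.

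\emph{Structural to LTS (1 $\Rightarrow$ 2).} Given $P \equiv \nu z_1 \dots \nu z_n (x(y).Q \mid R)$, it suffices to show that the right-hand side performs an $x(z)$ transition. The claim then transfers back to $P$ by the second clause of Lemma~\ref{strength}, since $\equiv$ is symmetric and strong transitions are preserved up to $\equiv$. The transition of the telescope is built by induction on $n$, i.e.\ on the derivation of {\ttfamily barb\_in\_rew}.
\begin{itemize}
\item In the base case, S-In gives $x(y).Q \xrightarrow{x(y)} Q$, and S-Par-L lifts it to $x(y).Q \mid R$. In the HOAS encoding the side condition is discharged automatically by rule {\ttfamily bs\_par1}.
\item In the step case, S-Res ({\ttfamily bs\_res}) wraps the transition under $\nu z_i$. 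The freshness condition $x \neq z_i$ is exactly the hypothesis $x\notin\{z_1,\dots,z_n\}$, which in the encoding is built in because $x$ does not depend on the bound variable.
\end{itemize}

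For outputs, the base case uses S-Out and S-Par-L. In the step case Lemma~\ref{lts:one} applies: restricting over $z_i$ yields either a bound output $\bar x(z_i)$ or a free output $\bar x y$ under $\nu z_i$. Once we are in the bound-output case, further restrictions are handled by {\ttfamily bs\_res} on $b\_out$. We therefore need an auxiliary statement carrying the disjunction ``free output or bound output'' through the induction.

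\emph{LTS to structural (2 $\Rightarrow$ 1).} We proceed by induction on the derivation of $P \xrightarrow{x(z)} P'$, with a mutual induction for outputs, and in each case construct the congruence to a telescope.
\begin{itemize}
\item For S-In, take $n=0$ and use $x(y).Q \equiv x(y).Q \mid \mathbf 0$ (Par-Unit and symmetry).
\item For S-Par-L with $P_1 \mid P_2$, the induction hypothesis gives $P_1 \equiv \nu \vec z(x(y).Q \mid R)$. Then $P_1 \mid P_2 \equiv (\nu\vec z(x(y).Q\mid R))\mid P_2 \equiv \nu \vec z(x(y).Q \mid (R\mid P_2))$, using compatibility, Sc-Ext-Par repeated $n$ times (freshness of $\vec z$ in $P_2$ is automatic in HOAS), and associativity.
\item S-Par-R is symmetric, using Par-Comm.
\item For S-Res, the induction hypothesis holds under the binder, and C-Res prepends one more restriction. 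This is precisely {\ttfamily biw\_add}.
\item For S-Rep, $!P \equiv P \mid !P$ by Rep-Unfold, which reduces to the parallel case.
\item For S-Open, in the output case, the premise is a free output of $P z$, so the induction hypothesis gives a telescope for each $z$, and we add $\nu z$ in front.
\end{itemize}
The $\tau$ rules are irrelevant because the labels do not match, so those cases are vacuous.

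\emph{Expected obstacle.} The main difficulty is the parallel case. Pushing $P_2$ inside an $n$-ary telescope requires an auxiliary lemma, proved by induction on {\ttfamily barb\_in\_rew}: if $P_1 \equiv T$ with $T$ a telescope, then $P_1\mid P_2$ is congruent to a telescope whose body has $P_2$ composed in. In Beluga this forces weakening $P_2$ into the extended LF context at each binder. We would state this lemma first, as a function returning an existential {\ttfamily barb\_in\_rew} object together with a {\ttfamily cong} derivation, and then use it in both the input and output directions.
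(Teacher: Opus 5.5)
Your proposal is correct and follows essentially the paper's route: the structural-to-LTS direction is proved by induction on the telescope (rewriting) derivations, carrying the free/bound output disjunction and transferring transitions along $\equiv$ via Lemma~\ref{strength}, and the converse is proved by a case analysis on how the observable transition arises, with an auxiliary lemma that pushes a parallel component inside a telescope. The only difference is cosmetic: you induct on the transition derivation, whereas the paper uses structural induction on $P$ followed by inversion on the transition. Both give the same case split, because every LTS rule fixes the top-level constructor of its source process.
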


\begin{proof}
  (Sketch)
  The proof that~\ref{def:barb_one} implies~\ref{def:barb_two} is a straightforward induction on the rewriting definitions. The converse proceeds by structural induction on $P$, followed by an inversion on the observable transition that stems from $P$; some subcases are handled by the omitted auxiliary lemmas.
\end{proof}

Being invariant w.r.t.\ the underlying LTS,
definition~\ref{def:barb_one} can be adopted even in the setting of
reduction semantics; however, definition~\ref{def:barb_two} has the benefit of
being more amenable to mechanized proofs without the need of algebraic
rewrites. For this reason, we use the latter in the
rest of the development.

A significant property of barbs is that they are preserved by contexts:

\begin{lemma}\label{le:barbs_ctx}
  \leavevmode
  \begin{enumerate}
    \item If $P \downarrow_x$ implies $Q \downarrow_x$ and $((P,C_P),(Q,C_Q))$, then $C_P \downarrow_x$ implies $C_Q \downarrow_x$.~\clink{4\_contexts.bel}{67}{83}
    \item If  $P \downarrow_{\overline{x}}$ implies $Q \downarrow_{\overline{x}}$ and $((P,C_P),(Q,C_Q))$, then $C_P \downarrow_{\overline{x}}$ implies $C_Q \downarrow_{\overline{x}}$.~\clink{4\_contexts.bel}{85}{164}
  \end{enumerate}
\end{lemma}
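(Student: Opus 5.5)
We argue by induction on the derivation of the context relation $((P,C_P),(Q,C_Q))$, i.e.\ on the structure of the {\ttfamily \small \color{belugagreen}PCtx} derivation. In each case we invert the barb on $C_P$ (a transition $C_P \xrightarrow{x(z)} P'$, resp.\ $C_P \xrightarrow{\overline{x}y} P'$ or $C_P \xrightarrow{\overline{x}(z)} P'$) on its last rule and rebuild a transition of $C_Q$ with the same subject. The base case {\ttfamily pidM} is the hypothesis itself. For prefix contexts $x'(y).C$ and $\overline{x'}y.C$ the hole is guarded, so the barb of $C_P$ comes from the prefix, which $C_Q$ shares ({\ttfamily bs\_in}, {\ttfamily fs\_out}). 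For parallel contexts $C \mid R$ and $R \mid C$, a barb arising from $R$ is copied via {\ttfamily bs\_par1/2} or {\ttfamily fs\_par1/2}. A barb arising from $C$ is handled by the induction hypothesis, applied to the subderivation, and then lifted with the same parallel rule.

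The binder cases need more care. For an input context under a binder, the barb again comes from the prefix. For restriction $(\nu w)C$ we inspect the last rule. If it is {\ttfamily bs\_res} or {\ttfamily fs\_res}, the inner process $C[P]$ (in the extended LF context $g,w$) has a barb on $x$, where $x$ lives in $g$ and is weakened. The induction hypothesis then gives $C[Q]\downarrow_x$ in context $g,w$, and we re-wrap it with the res rule. To do this in Beluga we must know that the resulting transition does not depend on $w$ in the subject; this follows because $x$ does not mention $w$. For the bound-input case this is immediate from {\ttfamily bs\_res}. For output, we may obtain from $C[Q]$ either a free output $\overline{x}y'$ whose object $y'$ might be $w$, or a bound output. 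Here we use Lemma~\ref{lts:one}: a free output under $\nu w$ yields either a bound output via {\ttfamily bs\_open} (when the object is $w$) or a free output via {\ttfamily fs\_res}. Either way $(\nu w)C[Q]\downarrow_{\overline{x}}$. The case where $C_P$'s barb arises via {\ttfamily bs\_open} is analogous: the inner free output gives an output barb on $x$ for $C[P]$, and we then proceed as above.

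The replication context $!C$ follows the same pattern via {\ttfamily bs\_rep}/{\ttfamily fs\_rep}, since the only non-$\tau$ rule for replication is S-Rep. One issue is how the hypothesis is stated: it should be quantified over all names $x$ (and be available in extended LF contexts), because under binders we need to apply it at a weakened $x$ and in the larger context. We will therefore generalize the statement so the hypothesis is a function taking any name in the inner context.

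\textbf{Main obstacle.} We expect the output case under restriction to be the hardest part. The witness transition from the induction hypothesis may have its object captured by the restricted name, and we must case-split on strengthening in Beluga. This is exactly where Lemma~\ref{lts:one} and the free/bound split of {\ttfamily Barb\_out} are needed. That is why the output statement is considerably longer than the input one.
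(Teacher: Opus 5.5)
Your proposal is correct and follows the paper's proof: induction on the derivation of the context relation ({\ttfamily PCtx}), inverting the transition that witnesses the barb and rebuilding a matching transition of $C_Q$. The delicate case is output under restriction, where the witness from the induction hypothesis is a free output whose object is the restricted name (rewrapped with \texttt{bs\_open}), another free output (\texttt{fs\_res}), or a bound output (\texttt{bs\_res}); your handling of it matches the paper and explains why the output half is longer. One small remark: the inner context $g'$ of $P,Q$ stays fixed across all {\ttfamily PCtx} constructors, so the hypothesis only needs to quantify over names of $g'$ and never has to be made available in further extended contexts.
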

\begin{proof}
  By induction on the derivation of the context relation.
\end{proof}





\smallskip
We can now give the definition of barbed simulation.
A binary relation on processes $\mathcal{R}$ is a \textit{strong barbed simulation} if the following conditions hold:
\begin{enumerate}
    \item $\mathcal{R}$ is \textit{barb preserving}: if $P \mathbin\mathcal{R} Q$ and $P \downarrow_x$, then $Q \downarrow_x$; likewise, if $P \mathbin\mathcal{R} Q$ and $P \downarrow_{\overline{x}}$, then $Q \downarrow_{\overline{x}}$.
    \item $\mathcal{R}$ is a \textit{reduction simulation}: if $P \mathbin\mathcal{R} Q$ and $P \xrightarrow{\tau} P'$, then there is $Q'$ such that $Q \xrightarrow{\tau} Q'$ and $P' \mathbin\mathcal{R} Q'$.
\end{enumerate}
The union of all barbed simulations is called \textit{strong barbed similarity} and will be denoted by $\bsim$.

As explained above, barbs are not based on a specific LTS, and, as
proven in~\cite{Cecilia24}, albeit {without the replication operator},
early and late semantics provide the same $\tau$ actions. Thus, our
definition of strong barbed similarity turns out to be  equivalent to the one based
on the early semantics.

(Bi)simulations are perhaps the best known instance of coinductively defined relations. In calculi where the transition relation is well-founded --- e.g., those where each transition produces a structurally smaller process, ensuring that every computation trace is finite --- (bi)similarity can equivalently be characterized inductively. This is not the case in our fragment of the $\pi$-calculus, where infinite behaviors arise due to replication: thus, a genuinely coinductive treatment becomes necessary.

To our rescue, Beluga  supports coinductive reasoning in a very flexible way~\cite{ThibodeauCP16}.
Coinductive types are defined dually to inductive types: instead of constructors, they are characterized by destructors, or \emph{observations}. More specifically, while an inductive type is defined by inference rules whose conclusion is an instance of the type, a coinductive type is defined by inference rules where an instance of the type is the premise. Beluga separates this premise from the rest of the rule by using \texttt{::}.


Reasoning on coinductive objects proceeds via \emph{copattern matching}. By the Curry-Howard correspondence, proofs by (co)induction are encoded as total (co)recursive functions. In the coinductive case, totality requires coverage of all observations, while productivity ensures that each corecursive call is guarded by an observation; in the (current) absence of a productivity checker in Beluga, this condition must be verified manually. Observations of a coinductive object \texttt{c} are accessed using the syntax \texttt{c.observation\_name}, mirroring field access in mainstream programming languages.

As an example of a coinductive definition, we can look at the formalization of strong
barbed similarity in Fig.~\ref{fig:barb_sim}, which presents three observations, corresponding to
the input barbs, the output barbs and the $\tau$ transitions. Recall
that Beluga does not provide constructs for existential quantification, conjunction, or disjunction, so these must be encoded as inductive types.

\begin{figure}[th]
\begin{lstlisting}[style=belugastyleframes,linerange={BarbSim-End}]
%%%%%%%%%%%%%%%%%%%%%%%%%%%%%%%%%%%%%%%%%%%%%%%%%%%%%%%%%

%%% Definitions %%%
% In order to encode input and restriction processes, which both bind names, we exploit higher order abstract syntax
% and use higher-order functions from names to proc.
% In this way we don't need to give an explicit name to bound names and deal with alpha-renaming or substitution.

%% Names_and_Processes %%
LF names: type =;

LF proc: type =
 | p_zero: proc                                
 | p_in: names → (names → proc) → proc         
 | p_out: names → names → proc → proc          
 | p_par: proc → proc → proc                   
 | p_res: (names → proc) → proc                
 | p_rep: proc → proc; ---infix p_par 11 left.

schema ctx = names;
%% End %%

% Equality of processes
LF eqp: proc → proc → type =
  | refp: eqp X X
;

% Equality of names
LF eqn: names → names → type =
  | refl: eqn N N
;

% empty type
LF not_possible : type = 
;



% Beniamino's trick - Predicate for contexts for a couple of terms
% Meta level
% PCtx [g' |- P] [g |- CP] [g' |- Q] [g |- CQ] holds if there is a context C such that C[P] = CP and C[Q] = CQ

%% Contexts %%
inductive PCtx: (g:ctx)(g':ctx)[g' ⊢ proc] → [g ⊢ proc] → [g' ⊢ proc] → [g ⊢ proc] → ctype =
  | pidM: PCtx [g ⊢ P] [g ⊢ P] [g ⊢ Q] [g ⊢ Q]
  | pinM: PCtx [g' ⊢ P] [g,x:names ⊢ CP] [g' ⊢ Q] [g,x:names ⊢ CQ] 
       → PCtx [g' ⊢ P] [g ⊢ p_in X (\x.CP)] [g' ⊢ Q] [g ⊢ p_in X (\x.CQ)]
  | poutM: PCtx [g' ⊢ P] [g ⊢ CP] [g' ⊢ Q] [g ⊢ CQ] 
        → PCtx [g' ⊢ P] [g ⊢ p_out X Y CP] [g' ⊢ Q] [g ⊢ p_out X Y CQ]
  ...								 
%% End %%

% Structural congruence
LF cong: proc → proc → type =
% Abelian Monoid Laws for Parallel Composition
  | par_assoc: cong (P p_par (Q p_par R)) ((P p_par Q) p_par R)
  | par_unit: cong (P p_par p_zero) P
  | par_comm: cong (P p_par Q) (Q p_par P)
% Scope Extension Laws
  | sc_ext_zero: cong (p_res \x.p_zero) p_zero
  | sc_ext_par: cong ((p_res P) p_par Q) (p_res \x.((P x) p_par Q))
  | sc_ext_res: cong (p_res \x.(p_res \y.(P x y))) (p_res \y.(p_res \x.(P x y)))
% Replication unfolding
  | rep_unfold: cong (p_rep P) (P p_par (p_rep P))
% Compatibility Laws
  | c_in: ({y:names} cong (P y) (Q y)) → cong (p_in X P) (p_in X Q)
  | c_out: cong P Q → cong (p_out X Y P) (p_out X Y Q)
  | c_par: cong P P' → cong (P p_par Q) (P' p_par Q)
  | c_res: ({x:names} cong (P x) (Q x)) → cong (p_res P) (p_res Q)
  | c_rep: cong P Q -> cong (p_rep P) (p_rep Q)
% Equivalence Relation Laws
  | c_ref: cong P P
  | c_sym: cong P Q → cong Q P
  | c_trans: cong P Q → cong Q R → cong P R
;
--infix cong 11 left.

%% Structural_congruence %%
inductive Cong: (g:ctx) [g ⊢ proc] → [g ⊢ proc] → ctype =
  ...
% Replication unfolding
  | Rep_unfold: Cong [g ⊢ (p_rep P)] [g ⊢ (P p_par (p_rep P))]
% Context closure
  | C_ctx: Cong [g' ⊢ P] [g' ⊢ Q] → PCtx [g' ⊢ P] [g ⊢ CP] [g' ⊢ Q] [g ⊢ CQ] 
        → Cong [g ⊢ CP] [g ⊢ CQ]
% Equivalence Relation Laws
  | C_ref: Cong [g ⊢ P] [g ⊢ P]
  | C_sym: Cong [g ⊢ P] [g ⊢ Q] → Cong [g ⊢ Q] [g ⊢ P]
  | C_trans: Cong [g ⊢ P] [g ⊢ Q] → Cong [g ⊢ Q] [g ⊢ R] → Cong [g ⊢ P] [g ⊢ R]
;
%% End %%

%%%%%%%%%%%%%%%%%%%%%%%%%%%%%%%%%%%%%%%%%%%%%%%%%%%%%%%%%

%%% LTS Semantics %%%
% We follow "Pi-Calculus in (Co)Inductive Type Theory" [Honsell et al. 2001] for the encoding of late LTS semantics.
% We define two different types for free/bound actions, and two different relations for transitions via free/bound actions.
% The result of a free transition is a process, while the result of a bound transition is a function from names to processes:
% instead of stating the bound name involved in the transition explicitly, that name is the argument of the aforementioned function.

% Free Actions
LF f_act: type =
  | f_tau: f_act
  | f_out: names → names → f_act
;

% Bound Actions
LF b_act: type =
  | b_in: names → b_act
  | b_out: names → b_act
;

% Equality of actions
LF eqf: f_act → f_act → type =
  | reff: eqf A A
;

LF eqb: b_act → b_act → type =
  | refb: eqb A A
;

% Transition Relation
LF fstep: proc → f_act → proc → type =
  | fs_out: fstep (p_out X Y P) (f_out X Y) P
  | fs_par1: fstep P A P' → fstep (P p_par Q) A (P' p_par Q)
  | fs_par2: fstep Q A Q' → fstep (P p_par Q) A (P p_par Q')
  | fs_com1: fstep P (f_out X Y) P' → bstep Q (b_in X) Q'
    → fstep (P p_par Q) f_tau (P' p_par (Q' Y))
  | fs_com2: bstep P (b_in X) P' → fstep Q (f_out X Y) Q'
    → fstep (P p_par Q) f_tau ((P' Y) p_par Q')
  | fs_res: ({z:names} fstep (P z) A (P' z))
    → fstep (p_res P) A (p_res P')
  | fs_close1: bstep P (b_out X) P' → bstep Q (b_in X) Q'
    → fstep (P p_par Q) f_tau (p_res \z.((P' z) p_par (Q' z)))
  | fs_close2: bstep P (b_in X) P' → bstep Q (b_out X) Q'
    → fstep (P p_par Q) f_tau (p_res \z.((P' z) p_par (Q' z)))
  | fs_rep: fstep P A P' → fstep (p_rep P) A (P' p_par (p_rep P))
  | fs_rep_com: fstep P (f_out X Y) P1 → bstep P (b_in X) P2
    → fstep (p_rep P) f_tau ((P1 p_par (P2 Y)) p_par (p_rep P))
  | fs_rep_close: bstep P (b_out X) P1 → bstep P (b_in X) P2
    → fstep (p_rep P) f_tau ((p_res \z.((P1 z) p_par (P2 z))) p_par (p_rep P))

and bstep: proc → b_act → (names → proc) → type =
  | bs_in: bstep (p_in X P) (b_in X) P
  | bs_par1: bstep P A P' → bstep (P p_par Q) A \x.((P' x) p_par Q)
  | bs_par2: bstep Q A Q' → bstep (P p_par Q) A \x.(P p_par (Q' x))
  | bs_res: ({z:names} bstep (P z) A (P' z))
    → bstep (p_res P) A \x.(p_res \z.(P' z x))
  | bs_open: ({z:names} fstep (P z) (f_out X z) (P' z))
    → bstep (p_res P) (b_out X) P'
  | bs_rep: bstep P A P' → bstep (p_rep P) A \x.((P' x) p_par (p_rep P))
;


%%% Barbed similarity %%%

inductive Barb_in : (g:ctx) [g |- proc] -> [g |- names] -> ctype =
  | Barb_in_b : [g |- bstep P (b_in X) (\x. P')] -> Barb_in [g |- P] [g |- X]
;

inductive Barb_out : (g:ctx) [g |- proc] -> [g |- names] -> ctype =
  | Barb_out_f : [g |- fstep P (f_out X Y) P'] -> Barb_out [g |- P] [g |- X]
  | Barb_out_b : [g |- bstep P (b_out X) (\x. P')] -> Barb_out [g |- P] [g |- X]
;

LF barb_in_rew : proc -> names -> type =
  | biw_zero : barb_in_rew ((p_in X Q) p_par R) X
  | biw_add: ({x : names} barb_in_rew (P x) X) -> barb_in_rew (p_res P) X
;

LF barb_in_red : proc -> names -> type =
  | barb_in_red_def: P cong P' -> barb_in_rew P' X -> barb_in_red P X
;

LF barb_out_rew : proc -> names -> type =
  | bow_zero : barb_out_rew ((p_out X Y Q) p_par R) X
  | bow_add: ({x : names} barb_out_rew (P x) X) -> barb_out_rew (p_res P) X
;

LF barb_out_red : proc -> names -> type =
  | barb_out_red_def: P cong P' -> barb_out_rew P' X -> barb_out_red P X
;

%% BarbSim %%
coinductive BarbSim : (g:ctx) [g ⊢ proc] → [g ⊢ proc] → ctype =
  | (BarbSim_barb_in : BarbSim [g ⊢ P] [g ⊢ Q])
              :: Barb_in [g ⊢ P] [g ⊢ X] → Barb_in [g ⊢ Q] [g ⊢ X]
  | (BarbSim_barb_out : BarbSim [g ⊢ P] [g ⊢ Q])
              :: Barb_out [g ⊢ P] [g ⊢ X] → Barb_out [g ⊢ Q] [g ⊢ X]
  | (BarbSim_tau : BarbSim [g ⊢ P] [g ⊢ Q])
              :: [g ⊢ fstep P f_tau P'] → Ex_sim_barb [g ⊢ P] [g ⊢ Q] [g ⊢ P']

and inductive Ex_sim_barb : (g:ctx) [g ⊢ proc] → [g ⊢ proc] → [g ⊢ proc] → ctype =
  | Ex_sim_barb_def : [g ⊢ fstep Q f_tau Q'] → BarbSim [g ⊢ P'] [g ⊢ Q'] 
                   → Ex_sim_barb [g ⊢ P] [g ⊢ Q] [g ⊢ P']
;
%% End %%

%%% Barbed similarity up to%%%

coinductive BarbSimUpTo : (g:ctx) [g |- proc] -> [g |- proc] -> ctype =
  | (BarbSimUpTo_barb_in : BarbSimUpTo [g |- P] [g |- Q])
              :: Barb_in [g |- P] [g |- X] -> Barb_in [g |- Q] [g |- X]
  | (BarbSimUpTo_barb_out : BarbSimUpTo [g |- P] [g |- Q])
              :: Barb_out [g |- P] [g |- X] -> Barb_out [g |- Q] [g |- X]
  | (BarbSimUpTo_f_tau : BarbSimUpTo [g |- S1] [g |- S2])
              :: [g |- fstep S1 f_tau S1'] -> Ex_sim_barb_up_to [g |- S1] [g |- S2] [g |- S1']

and inductive Ex_sim_barb_up_to : (g:ctx) [g |- proc] -> [g |- proc] -> [g |- proc] -> ctype =
  | Ex_sim_barb_up_to_def : [g |- fstep Q f_tau Q'] -> BarbSim [g |- P'] [g |- P''] -> BarbSimUpTo [g |- P''] [g |- Q''] -> BarbSim [g |- Q''] [g |- Q'] 
                 -> Ex_sim_barb_up_to [g |- P] [g |- Q] [g |- P']
;

%%% Barbed precongruence %%%
inductive BarbPre: (g':ctx) [g' |- proc] -> [g' |- proc] -> ctype =
  | BC:  (g':ctx) {P:[g' |- proc]} {Q:[g' |- proc]} 
    ((g : ctx) {CP:[g |- proc]} {CQ:[g |- proc]} PCtx [g' ⊢ P] [g ⊢ CP] [g' ⊢ Q] [g ⊢ CQ] -> BarbSim [g |- CP] [g |- CQ]) 
    -> BarbPre [g' |- P] [g' |- Q]
;

%%% Characterizations of barbed precongruence

% Context lemma characterization of barbed precongruence
inductive BarbPre': (g:ctx) [g |- proc] -> [g |- proc] -> ctype =
  | BC' : ((h:ctx) {$S : $[h |- g]} {R : [h |- proc]} -> BarbSim [h |- P[$S] p_par R] [h |- Q[$S] p_par R]) -> BarbPre' [g |- P] [g |- Q]
;

% Largest precongruence included in barbed similarity
coinductive LargPreCong: (g':ctx) [g' |- proc] -> [g' |- proc] -> ctype =
  | (LP_BarbSim : LargPreCong [g' |- P] [g' |- Q])
              :: BarbSim [g' |- P] [g' |- Q]
  | (LP_PreCong : LargPreCong [g' |- P] [g' |- Q])
              :: PCtx [g' ⊢ P] [g ⊢ CP] [g' ⊢ Q] [g ⊢ CQ] -> LargPreCong [g ⊢ CP] [g ⊢ CQ]
;


%%% OTHER IMPLICATION OF THE CONTEXT LEMMA

% Definition of natural numbers
LF nat : type =
  | z : nat
  | succ : nat -> nat
;

LF eqnat : nat -> nat -> type =
  | refnat: eqnat n n
;

LF leq : nat -> nat -> type =
  | samenat: leq n n
  | less: leq m n -> leq m (succ n)
;

% size of a context
inductive Dim : {g:ctx} [|- nat] -> ctype =
  | Dim_z: Dim [] [|- z]
  | Dim_succ: Dim [g] [|- n] -> Dim [g,x:names] [|- succ n]
;

% Multi_step_tau [|- n] [g |- P] [g |- Q] iff P evolves to Q in exactly n tau-transitions
inductive Multi_step_tau : (g:ctx) [|- nat] -> [g |- proc] -> [g |- proc] -> ctype =
  | Ms_tau_z: Multi_step_tau [|- z] [g |- P] [g |- P]
  | Ms_tau_succ: [g |- fstep P1 f_tau P2] -> Multi_step_tau [|- n] [g |- P2] [g |- P3] -> Multi_step_tau [|- succ n] [g |- P1] [g |- P3]
;

% prefix of a context
inductive Ctx_Pref: {g1:ctx} {g3:ctx} ctype =
  | Ctx_Pref_same: Ctx_Pref [g1] [g1]
  | Ctx_Pref_add: Ctx_Pref [g1] [g3] -> Ctx_Pref [g1] [g3,x:names]
;

% concatenation of contexts
inductive Ctx_App: {g1:ctx} {g2:ctx} {g3:ctx} ctype =
  | Ctx_App_empty: Ctx_App [g1] [] [g1]
  | Ctx_App_add: Ctx_App [g1] [g2] [g3] -> Ctx_App [g1] [g2,x:names] [g3,x:names]
;

% same name in a bigger context
inductive Name_Exp: (g2:ctx) (g3:ctx) [g2 |- names] -> [g3 |- names] -> ctype =
  | Name_Exp_same: Name_Exp [g2 |- X] [g2 |- X]
  | Name_Exp_add: Name_Exp [g2 |- X2] [g3 |- X3] -> Name_Exp [g2 |- X2] [g3,x:names |- X3[..]]
;

% concatenation of substitutions
inductive Subst_App: (g1:ctx) (g2:ctx) (g3:ctx) $[g1 |- g1] -> $[g2 |- g2] -> $[g3 |- g3] -> ctype =
  | Subst_App_empty: Subst_App $[g1 |- $S1] $[|- ^] $[g1 |- $S1]
  | Subst_App_add: Subst_App $[g1 |- $S1] $[g2 |- $S2] $[g3 |- $S3] 
    -> Name_Exp [g2,x:names |- X2] [g3,x:names |- X3] 
    -> Subst_App $[g1 |- $S1] $[g2,x:names |- $S2[..],X2] $[g3,x:names |- $S3[..],X3]
;

% expansion of a substitution
inductive Subst_Exp: (g1:ctx) (g3:ctx) $[g1 |- g1] -> $[g3 |- g3] -> ctype =
  | Subst_Exp_Def: Subst_App $[g1 |- $S1] $[g2 |- ..] $[g3 |- $S3] -> Subst_Exp $[g1 |- $S1] $[g3 |- $S3]
;

% Context built from substitution
%{inductive Ctx_pref_zero: (g':ctx) {g:ctx} $[g' |- g'] -> [g',x:names |- proc] -> ctype =
  | Ctx_pref_zero_empty: Ctx_pref_zero [] $[g' |- ..] [g',x:names |- p_zero]
  | Ctx_pref_zero_add: Partial_subst [g] [g'] Ctx_pref_zero [g] $[g' |- $S] [g',x:names |- P0] -> Ctx_pref_zero [g,x:names] $[g |- $S,Y] [g,x:names |- p_out x Y[..] P0]
;

inductive Ctx_pref_proc: (g:ctx) {g':ctx} [g |- proc] -> [g,x:names |- proc] -> ctype =
  | Ctx_pref_proc_empty: Ctx_pref_proc [] [g |- P] [g,x:names |- P[..]]
  | Ctx_pref_proc_add: Ctx_pref_proc [g'] [g |- P] [g,x:names |- P'] -> Ctx_pref_proc [g',x:names] [g |- P] [g,x:names |- p_in x \x. P'[..]]
;}%

%{inductive Ctx_par_subst: (g':ctx) {g:ctx} $[g |- g] -> [g' |- proc] -> [g' |- proc] -> [g',x:names |- proc] -> ctype =
  | Ctx_par_subst_empty: Ctx_par_subst [] $[ |- ^] [g' |- R] [g' |- P] [g',x:names |- (p_zero p_par P[..]) p_par R[..]]
  | Ctx_par_subst_add: Ctx_par_subst [g] $[g |- $S] [g',x:names |- R] [g',x:names |- P] [g',x:names,y:names |- (Po p_par Pi) p_par R[..]] -> Name_Exp [g,x:names |- X] [g',x:names |- X'] -> Ctx_par_subst [g,x:names] $[g,x:names |- $S[..],X] [g',x:names |- R] [g',x:names |- P] [g',x:names,y:names |- ((p_out x X'[..] Po) p_par (p_in y \x. Pi))]
;}%
\end{lstlisting}
\vspace{-0.5\baselineskip}
\caption{Coinductive definition of strong barbed similarity.}
\label{fig:barb_sim}
\end{figure}

\begin{lemma}[Properties of strong barbed similarity]
  \label{le:barbsim}
  Strong barbed similarity 
    \leavevmode
    \begin{enumerate}[label=\arabic*., ref=\thelemma.\arabic*]
        \item  is a preorder;~\clink{5\_barbsim.bel}{5}{15}
        \item\label{le:barbsim:two}  is preserved by input prefix, output prefix and restriction;~\clink{5\_barbsim.bel}{18}{40}
        \item  includes structural congruence.~\clink{6\_structcong\_in\_barbsim.bel}{724}{737} 
    \end{enumerate}
\end{lemma}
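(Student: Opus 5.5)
All three items will be proved by corecursion, building the required barbed simulation as a copattern-defined function into \texttt{BarbSim}. Each function answers the three observations (input barbs, output barbs, $\tau$-steps) and makes its corecursive calls only under the $\tau$ observation, inside the \texttt{Ex\_sim\_barb} witness. Productivity will be checked by hand, as explained above.

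\emph{Preorder.} For reflexivity I will corecursively define a proof of $P \bsim P$. The barb observations return the given barb unchanged. The $\tau$ observation answers $P \xrightarrow{\tau} P'$ with the same step and the corecursive call on $P'$. For transitivity, given $P \bsim Q$ and $Q \bsim R$, barbs are transported by composing the two barb observations. A step $P \xrightarrow{\tau} P'$ is first passed to the first proof, which yields $Q \xrightarrow{\tau} Q'$ with $P' \bsim Q'$. That step is then passed to the second proof, which yields $R \xrightarrow{\tau} R'$ with $Q' \bsim R'$. The answer is $R'$ together with the corecursive call on the two residual proofs.

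\emph{Prefixes and restriction.} For input and output prefixes no corecursion is needed. Neither $x(y).P$ nor $\bar{x}y.P$ has a $\tau$-transition, so the $\tau$ observation is discharged by inversion. The only barb of each prefix is on $x$ (resp.\ $\overline{x}$), and the other prefix exhibits it directly via \textsc{S-In} or \textsc{S-Out}. Note that this holds without even using $P \bsim Q$, which matches the fact that strong barbed similarity does not look under prefixes.

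For restriction, given $\{z\}\, P\,z \bsim Q\,z$ in the extended context, I will corecursively build $(\nu z)P \bsim (\nu z)Q$. By inversion, a $\tau$-step of $(\nu z)P$ can only arise by \textsc{S-Res}, giving $P\,z \xrightarrow{\tau} P'\,z$. The hypothesis yields $Q\,z \xrightarrow{\tau} Q'$. The step is rewrapped with \texttt{fs\_res}, and the function recurses on the new pair. The delicate point is the barbs. An input barb of $(\nu z)P$ on $x$ comes from \texttt{bs\_res}, where the name $x$ lives in the outer context and so does not depend on $z$. The hypothesis then gives an input barb for $Q\,z$ on the same $x$, and we rewrap it with \texttt{bs\_res}. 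An output barb may come from \texttt{fs\_res}, from \texttt{bs\_res}, or from \texttt{bs\_open}. In each case the underlying output of $P\,z$ has subject $x \neq z$, so $Q\,z$ has an output on $x$, either free or bound. That output must be lifted back through $\nu z$, and here I will invoke Lemma~\ref{lts:one}: a free output $\bar{x}y$ of $Q\,z$ becomes either a bound output $\bar{x}(z)$ or a free output $\bar{x}y$ of $(\nu z)Q$. I expect the main technical nuisance to be showing, in Beluga, that the subject name does not depend on $z$; this is handled by pattern matching on the LF context.

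\emph{Inclusion of $\equiv$.} I will show that $\equiv$ itself is a barbed simulation, corecursively on a derivation $P \equiv Q$. Barbs are transferred by the first clause of Lemma~\ref{strength}: a barb is just the existence of an input or output transition with subject $x$, so the matching transition of $Q$ gives the barb. For $\tau$-steps, the same lemma turns $P \xrightarrow{\tau} P'$ into $Q \xrightarrow{\tau} Q'$ with $P' \equiv Q'$, and the function recurses on this new congruence derivation. The main obstacle is therefore not this corecursion but Lemma~\ref{strength} itself, especially the replication cases. There the new rules \textsc{S-Rep-Com} and \textsc{S-Rep-Close} are exactly what is needed to match a $\tau$-step of $P \mid !P$ against one of $!P$. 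This is where Lemma~\ref{lts:two} and Lemma~\ref{lts:three} are used, the latter for the side condition of scope extension. Since Lemma~\ref{strength} is already available, the inclusion reduces to the short corecursive argument above.
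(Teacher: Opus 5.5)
Your proposal is correct and follows the paper's proof. The preorder is a straightforward coinduction, restriction is corecursion with inversion on \textsc{S-Res} for the $\tau$ case, and the inclusion of $\equiv$ is a direct appeal to Lemma~\ref{strength}. The one difference is in the barb observations of the restriction case: the paper discharges them with the general Lemma~\ref{le:barbs_ctx}, which says barb inclusion is preserved by any context (here $\nu z\,[\_]$), whereas you invert directly and use Lemma~\ref{lts:one}, which amounts to inlining that lemma's restriction case.
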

\begin{proof}
  \mbox{}
  
  \begin{enumerate}
  \item By a straightforward coinduction,  as in~\cite{MomiglianoPT19}.
  \item To illustrate how Beluga implements a
    coinductive proof, we detail the restriction case. On paper, we
    would prove that $\{(\nu z P, \nu z Q) \mid P \bsim Q\}$ is a
    strong barbed simulation and therefore is included in $\bsim$. In
    Beluga, this is realized by the recursive function in
    Fig.~\ref{fig:coind_ex}, whose signature encodes the statement. The proof distinguishes cases on the possible observations --- note the  copattern syntax reminiscent of record selection:
    \begin{itemize}
    \item In the barbs cases we employ Lemma~\ref{le:barbs_ctx}, by which if
      the barbs of $P$ are included in the barbs of $Q$, then for each
      context $C$ (and in particular $\nu z [\_]$), the barbs of
      $C[P]$ are included in the barbs of $C[Q]$.
    \item  More interestingly, in the transition case we invert on the
    restriction rule for a $\tau$ action from $P$ to some $P'$ (the first \lstinline{let}). Then,
    using the hypothesis $P \bsim Q$ (\lstinline{BarbSim}), we can find a matching action
    from $Q$ to $Q'$, where $P' \bsim Q'$. Finally, we can
    corecursively call the theorem on $P' \bsim Q'$ and, by applying
    the restriction rule forward, obtain the thesis. The corecursive
    call is valid, because it is guarded by the \lstinline{BarbSim_tau} observation
    and the proof covers all cases, because we analyzed all the
    observable properties that characterize strong barbed similarity.
    \end{itemize}


  \item This follows immediately from Lemma~\ref{strength}, which was also crucial for one implication of the Harmony Lemma. 

  \end{enumerate}

\end{proof}
\begin{figure}[ht]
  \lstinputlisting[style=belugastyleframes,linerange={Example-End}]{code/barbsim.bel}
\vspace{-0.5\baselineskip}
\caption{Encoding of the proof of Lemma~\ref{le:barbsim:two}.}
\label{fig:coind_ex}
\end{figure}

We now recall the notion of \emph{strong barbed simulation up to} $\bsim$, namely a 
 binary relation on processes $\mathcal{R}$ such that:
\begin{enumerate}
    \item $\mathcal{R}$ is barb preserving.
    \item if $P \mathbin\mathcal{R} Q$ and $P \xrightarrow{\tau} P'$, then there is $Q'$ such that $Q \xrightarrow{\tau} Q'$ and $P' \bsim \mathbin\mathcal{R} \bsim Q'$.
\end{enumerate}
As above, \emph{strong barbed similarity up to $\bsim$} is the union of all strong barbed
simulations up to $\bsim$.

Again, we cannot
develop a general theory of relations up to: we simply encode strong
barbed similarity up to $\bsim$ as we did for strong barbed
similarity --- the only difference being in the inductive auxiliary type,
which encodes the \emph{up to} $\bsim$.\clink{1\_definitions.bel}{204}{215} 

In this setup, strong barbed similarity up to $\bsim$ coincides with
$\bsim$.~\clink{5\_barbsim.bel}{44}{54} The use here of  \emph{up-to technique}
is just for convenience: in order to prove that a relation is included
in strong barbed similarity, it is sufficient to prove that it is a
strong barbed simulation up to $\bsim$ and therefore included in
barbed similarity up to $\bsim$.

\subsection{Barbed Precongruence}

One way to distinguish two processes is to observe whether they exhibit the same behavior when placed in the same environment, or context; a desirable property of a (bi)similarity is to be preserved by every possible environment. 
Unfortunately, strong barbed similarity does not satisfy contextual closure. 
For instance, the two processes $\overline{x}y.\overline{a}b.0$ and $\overline{x}y.0$ are strong barbed similar, since their only barb is $\downarrow_{\overline{x}}$ and they cannot perform any internal transition. Conversely, plugging them into the context $[\_] \mid x(z).0$ modifies their observable behavior: the former can perform a transition $\overline{x}y.\overline{a}b.0 \mid x(z).0 \xrightarrow{\tau} \overline{a}b.0 \mid 0$, enabling the observation of a barb $\downarrow_{\overline{a}}$ that the process $\overline{x}y.0 \mid x(z).0$, after an internal transition, cannot produce.

Therefore, we aim to identify those
 pairs of processes whose behavior remains strongly barbed similar under all contexts.
The processes $P$ and $Q$ are \textit{strongly barbed precongruent}, denoted by $P \bpre Q$, if, for each context $C$, $C[P] \bsim C[Q]$.~\clink{1\_definitions.bel}{218}{222}


Among the properties of strong barbed precongruence, the most prominent is its characterization as the largest precongruence included in strong barbed similarity. It turns out that the latter can be formalized as a coinductive definition with only two observations, requiring the inclusion in $\bsim$ and the contextual closure.~\clink{1\_definitions.bel}{230}{235}

%

\begin{lemma}[Properties of strong barbed precongruence] Strong barbed precongruence
    \leavevmode
    \begin{enumerate}[label=\arabic*., ref=\thelemma.\arabic*]
        \item is  a preorder;~\clink{8\_barbprecong.bel}{4}{13}
        \item is included in strong barbed similarity;~\clink{8\_barbprecong.bel}{16}{19}
        \item is a precongruence;~\clink{8\_barbprecong.bel}{32}{35}
        \item\label{lem:larg_precong} is the largest precongruence included in strong barbed similarity;~\clink{8\_barbprecong.bel}{37}{46}
        \item includes structural congruence.~\clink{8\_barbprecong.bel}{23}{27}
    \end{enumerate}
\end{lemma}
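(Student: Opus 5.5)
The plan is to derive each item from the definition of $\bpre$ (i.e.\ \texttt{BarbPre}: for every $C$ with $((P,C_P),(Q,C_Q))$ we have $C_P \bsim C_Q$), the properties of contexts, and Lemma~\ref{le:barbsim}. Only item~4 needs coinduction.

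\emph{Preorder.} For reflexivity, given $((P,C_P),(P,C_P'))$, functionality gives $C_P = C_P'$, and reflexivity of $\bsim$ closes the case. For transitivity, suppose $P \bpre Q$, $Q \bpre R$ and $((P,C_P),(R,C_R))$. The transitivity property of contexts yields a $C_Q$ with $((P,C_P),(Q,C_Q))$ and $((Q,C_Q),(R,C_R))$. Hence $C_P \bsim C_Q \bsim C_R$, and transitivity of $\bsim$ finishes.

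\emph{Inclusion in $\bsim$.} Instantiate the defining hypothesis with the identity context \texttt{pidM}.

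\emph{Precongruence.} Assume $P \bpre Q$ and $((P,C_P),(Q,C_Q))$. For any further $((C_P,D),(C_Q,D'))$, context composition gives $((P,D),(Q,D'))$, so $D \bsim D'$; therefore $C_P \bpre C_Q$.

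\emph{Inclusion of structural congruence.} Here I use the context-closure characterization \texttt{Cong}. If $P \equiv Q$ and $((P,C_P),(Q,C_Q))$, then the \texttt{C\_ctx} rule gives $C_P \equiv C_Q$. Lemma~\ref{le:barbsim} (item~3) then gives $C_P \bsim C_Q$. The only extra step is to convert between the LF \texttt{cong} and \texttt{Cong} using the equivalence lemma.

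\emph{Largest precongruence.} Both directions are stated against \texttt{LargPreCong}.
\begin{itemize}
\item[(a)] From $P \bpre Q$ to \texttt{LargPreCong}, by corecursion. The observation \texttt{LP\_BarbSim} is answered by item~2. The observation \texttt{LP\_PreCong} on a context pair is answered by item~3, which gives $C_P \bpre C_Q$, followed by a corecursive call; this call is guarded by that observation.
\item[(b)] From \texttt{LargPreCong} to $\bpre$. Given $((P,C_P),(Q,C_Q))$, observe \texttt{LP\_PreCong} to obtain \texttt{LargPreCong} $C_P\ C_Q$, then observe \texttt{LP\_BarbSim} to get $C_P \bsim C_Q$. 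This direction is non-recursive.
\end{itemize}

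The main obstacle is item~4, and specifically Beluga's context bookkeeping. In \texttt{PCtx} the LF contexts $g$ and $g'$ differ and are quantified implicitly, so both the corecursion and the use of context composition must be stated over arbitrary $(g:ctx)$. Productivity must also be checked by hand. The transitivity of contexts used in the preorder proof, which builds $C_Q$, is the other delicate point, but it is already available as an earlier lemma.
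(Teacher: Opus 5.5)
Your proposal is correct and follows essentially the paper's route. Items 1--3 come straight from the definition of $\bpre$, the properties of contexts and the preorder property of $\bsim$. Item 5 uses the \texttt{C\_ctx} clause of \texttt{Cong}, which is exactly why the paper introduces that characterization, together with the inclusion of $\equiv$ in $\bsim$. Item 4 is argued against \texttt{LargPreCong}, and your non-recursive direction (b) is the one the Context Lemma relies on.

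The paper calls item 5 a coinductive argument. In your version that coinduction is simply delegated to Lemma~\ref{le:barbsim}, item 3, which is fine.
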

The first three proofs immediately follow from the definition of $\bpre$, the fact that $\bsim$ is a preorder and context composition; the last two are straightforward coinductive arguments.


\section{Context Lemma}

The objective of CCFB3 is to prove that making barbed bisimilarity
sensitive to substitutions and parallel composition is sufficient to
establish barbed congruence. This result is an instance of a \emph{context}
lemma, namely a characterization of congruence that relaxes the
universal quantification on arbitrary contexts, thereby simplifying
proofs of congruence.


We recall that \textit{substitutions} are endofunctions on names with finite support; they can be extended to processes and actions by simultaneously replacing all their free occurrences of names. To encode them, we rely on Beluga's built-in notion of simultaneous substitutions. In particular, given two Beluga contexts {\ttfamily \small g},{\ttfamily \small h:ctx}, a substitution {\ttfamily \small \$S:\$[h |- g]} maps the names in {\ttfamily \small g} to names in {\ttfamily \small h}; given a process \mbox{\ttfamily \small P:[g |- proc]} that depends on names in {\ttfamily \small g},  the process {\ttfamily \small P[\$S]:[h |- proc]} is obtained by replacing its names according to {\ttfamily \small \$S}.

Next, we denote as
$\bpree \coloneqq \{ (P,Q) \mid \text{for all} \ R, \sigma, (P \sigma \mid R)
\bsim (Q \sigma \mid R) \}$ the relation obtained by closing barbed
similarity under parallel composition and 
substitutions. Analogously to barbed precongruence, it is formalized by an inductive type {\ttfamily \small \color{belugagreen} BarbPre'}.~\clink{1\_definitions.bel}{225}{227}



The strategy to prove the context lemma consists in showing that $\bpree$ is a precongruence included in strong barbed similarity; since, by Lemma~\ref{lem:larg_precong}, strong barbed precongruence is the largest precongruence included in strong barbed similarity, it follows that $\bpree \subseteq \bpre$. Below, we detail the results required to complete the proof.

First, an auxiliary lemma describing how transitions of a process $P\sigma$ arise from transitions of $P$:

\begin{lemma} \label{le:auxsubst}
 If $P \sigma \xrightarrow{\alpha} P'$ and $\alpha \neq \tau$, then there are $\beta, P''$ such that $P \xrightarrow{\beta} P''$ and $\alpha = \beta \sigma$.~\clink{9\_context\_lemma.bel}{3}{81}
\end{lemma}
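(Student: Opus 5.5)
We argue by induction on the derivation of the transition $P\sigma \xrightarrow{\alpha} P'$, following the split of the LTS into the mutually defined \texttt{fstep} and \texttt{bstep}. The lemma therefore becomes two mutually inductive lemmas:
\begin{itemize}
\item one for free output actions $\alpha = \overline{x}y$;
\item one for bound actions $\alpha \in \{x(z), \overline{x}(z)\}$.
\end{itemize}
The restriction $\alpha \neq \tau$ is what makes the statement tractable. With $\alpha \neq \tau$, none of the communication rules (S-Com, S-Close, S-Rep-Com, S-Rep-Close) can end the derivation. Those rules are exactly the ones where $P\sigma$ could synchronize on channels identified by $\sigma$ that were distinct in $P$. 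In Beluga the $\tau$ case is simply excluded by the index \texttt{f\_out X Y} in the free lemma, so coverage only asks for the remaining rules.

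In each case we first invert on the shape of $P$, since $P[\$S]$ has the same top constructor as $P$: substitutions map names to names, never names to processes. We then proceed as follows.
\begin{itemize}
\item \textbf{Axioms.} For S-Out and S-In, $P = \overline{u}v.P_0$ (resp.\ $u(z).P_0$) with $u\sigma = x$ and $v\sigma = y$. We take $\beta = \overline{u}v$ (resp.\ $u(z)$) and $P''$ the continuation.
\item \textbf{S-Par and S-Rep.} We apply the induction hypothesis to the premise, which is a transition of $P_1\sigma$ (or of $P_0\sigma$ for $!P_0$). The resulting $\beta$ is reused, with the corresponding rule applied forward.
\item \textbf{S-Res.} Here $P = \nu z P_0$, and the premise is a transition of $(P_0\sigma')$ under the binder, where $\sigma'$ extends $\sigma$ with $z \mapsto z$. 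In Beluga this is the substitution \texttt{\$S[..], z}, obtained by extending the contexts \texttt{g}, \texttt{h} with the fresh name. We apply the induction hypothesis in the extended contexts to get $\beta'$. The key point is that $\beta'$ does not depend on $z$, because the action in the conclusion lives in \texttt{h} without $z$. We then close with S-Res forward.
\item \textbf{S-Open.} The premise is a free output $P_0\sigma' \xrightarrow{\overline{x}z} P_0'$. The free-output lemma yields $\beta = \overline{u}w$ with $w\sigma' = z$. Since $z$ is fresh and $\sigma'$ maps only $z$ to $z$, we get $w = z$, and S-Open applies to $P$.
\end{itemize}

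We expect the main obstacle to be the binder cases, S-Res and S-Open, in the HOAS setting. One difficulty is stating the induction hypothesis generally enough: it must quantify over arbitrary contexts \texttt{g}, \texttt{h} and arbitrary substitutions \texttt{\$S : \$[h |- g]}, so that it can be reinstantiated at \texttt{\$S[..], z}. The other difficulty is ruling out dependencies on the fresh variable, namely that $\beta'$ and $u$ do not mention $z$ and that $w$ is exactly $z$. We would first try to get the latter for free from Beluga's dependency checking, by pattern matching on the name in the extended context (either it is the variable \texttt{x} or it is \texttt{U[..]}). If this is not enough, we would fall back on a small auxiliary inversion lemma on names under weakened substitutions. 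A third point is that $\beta$ must be reconstructed as an LF object with $\beta[\$S] = \alpha$. It is convenient to return the subject $u$ (and object $v$) separately together with equalities \texttt{eqn}, rather than the action itself.
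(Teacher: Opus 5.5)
Your argument is correct and essentially the paper's. The paper first splits on $\alpha$ into three separate lemmas, for $x(y)$, $\overline{x}y$ and $\overline{x}(y)$; the bound-output one can simply call the free-output one in the S-Open case, so the mutual induction you set up is not needed. It then uses structural induction on $P$, which for $\alpha\neq\tau$ gives the same case tree as your rule induction with inversion on $P$, because every applicable rule is an axiom or has a single premise about an immediate subterm of $P$ (under $\sigma$ or its extension \texttt{\$S[..], z}).

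Beyond the freshness reasoning you anticipate in S-Res and S-Open, the paper reports that this lemma could only be mechanized without a totality check. Checking coverage of the case analysis on transitions of \texttt{P[\$S]} raises unification problems involving substitutions that Beluga does not yet support, an obstacle your plan does not mention.
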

\begin{proof}
    (Sketch) 
    By inversion on $\alpha$ and then structural induction on $P$. In the Beluga encoding, the three cases $\alpha = x(y), \bar{x}y$ and $\bar{x}(y)$ are addressed separately.
\end{proof}

Proving that $\bpree \subseteq \bsim$ is straightforward:

\begin{lemma} \label{le:inclbarbsim}
    $\bpree$ is included in strong barbed similarity.~\clink{9\_context\_lemma.bel}{86}{88}
\end{lemma}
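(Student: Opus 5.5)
The plan is to specialise the universal quantification in the definition of $\bpree$ and then remove the resulting parallel component. Suppose $P \bpree Q$.

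Choose $\sigma$ to be the identity substitution, which in Beluga is the weakening $\$[g \vdash ..]$ with $h = g$. Choose $R = \mathbf{0}$. The defining property of {\ttfamily \small BarbPre'} then gives
\[
  P \mid \mathbf{0} \;\bsim\; Q \mid \mathbf{0}.
\]
On the Beluga side this is a single application of the function stored in the {\ttfamily \small BC'} constructor. Since Beluga's simultaneous substitution applies the identity silently, {\ttfamily \small P[..]} is definitionally just {\ttfamily \small P}.

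Next we remove the $\mathbf{0}$ components. By Par-Unit we have $P \equiv P \mid \mathbf{0}$ and $Q \mid \mathbf{0} \equiv Q$. By Lemma~\ref{le:barbsim}, strong barbed similarity includes structural congruence, so $P \bsim P \mid \mathbf{0}$ and $Q \mid \mathbf{0} \bsim Q$. The same lemma states that $\bsim$ is a preorder, so transitivity chains the three facts into $P \bsim Q$.

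No coinduction is needed here, because all the coinductive work is already contained in the preorder and inclusion results of Lemma~\ref{le:barbsim}. The only fiddly point I expect is bookkeeping in the encoding. We must instantiate the context variable $h$ with $g$ and the substitution with the identity. We must also build the congruence derivations {\ttfamily \small par\_unit} and {\ttfamily \small c\_sym par\_unit} in the form the inclusion lemma expects, using the LF {\ttfamily \small cong} variant or converting through the equivalence lemma for structural congruence if needed.
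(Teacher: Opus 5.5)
Your proposal is correct and follows the paper's proof. The paper likewise instantiates $\bpree$ with the identity substitution and $R = \mathbf{0}$, giving the chain $P \equiv P \mid \mathbf{0} \bsim Q \mid \mathbf{0} \equiv Q$, and closes it using the inclusion of structural congruence in $\bsim$ and transitivity (Lemma~\ref{le:barbsim}).
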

\begin{proof}
    It is sufficient to observe that, if $P \bpree Q$, then $P \equiv P (${\myfont Id}$) \mid 0 \ \bsim \ Q (${\myfont Id}$) \mid 0 \equiv Q$, where {\myfont Id} denotes the identity substitution.
\end{proof}

The proof that $\bpree$ is a precongruence is  trickier and is based on the following result --- this is, in fact, the only instance in which coinductive up-to techniques are used.

\begin{lemma} \label{le:auxbarbsim}
    The following relations are included in strong barbed similarity:
    \begin{enumerate}
        \item $\mathcal{R}_1\coloneqq  \{ (x(y).(P \sigma) \mid R, \ x(y).(Q \sigma) \mid R) \colon P \bpree Q\}$.~\clink{9\_context\_lemma.bel}{91}{114}
        \item $\mathcal{R}_2\coloneqq \{ (\overline{x}y.(P \sigma) \mid R, \ \overline{x}y.(Q \sigma) \mid R) \colon P \bpree Q\}$.~\clink{9\_context\_lemma.bel}{117}{140}
        \item $\mathcal{R}_3\coloneqq\{ (!(P \sigma) \mid R, \ !(Q \sigma) \mid R) \colon P \bpree Q\}$.~\clink{9\_context\_lemma.bel}{143}{195}
    \end{enumerate}
\end{lemma}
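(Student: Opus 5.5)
For each $i$ we show that $\mathcal{R}_i$ is a strong barbed simulation up to $\bsim$. Since strong barbed similarity up to $\bsim$ coincides with $\bsim$, this gives $\mathcal{R}_i \subseteq \bsim$. In Beluga this becomes three corecursive functions returning \texttt{BarbSimUpTo}, each guarded by the \texttt{BarbSimUpTo\_f\_tau} observation. Throughout, $P \bpree Q$ gives $(P\sigma' \mid R') \bsim (Q\sigma' \mid R')$ for every substitution $\sigma'$ and every process $R'$.

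\textbf{Barbs.} For $\mathcal{R}_1$ and $\mathcal{R}_2$ the prefix term $x(y).(P\sigma)$ or $\overline{x}y.(P\sigma)$ exhibits only the barb $\downarrow_x$ or $\downarrow_{\overline{x}}$, as does its $Q$-counterpart. A barb of the parallel composition comes either from that prefix, which is mirrored exactly, or from $R$, which is shared. So barb preservation needs only a case analysis on the free/bound step for \texttt{Barb\_in}/\texttt{Barb\_out} through \texttt{fs\_par}/\texttt{bs\_par}, and no coinduction. For $\mathcal{R}_3$, a barb of $!(P\sigma)$ arises from a step $P\sigma \xrightarrow{\alpha} P'$. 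By Lemma~\ref{le:auxsubst} we obtain that $P \xrightarrow{\beta} P''$ with $\alpha = \beta\sigma$. I would then use $P\bpree Q$, instantiated with $\sigma$ and $R' = \mathbf{0}$, to conclude that $Q\sigma$ has the same barb, and lift it through \texttt{fs\_rep}/\texttt{bs\_rep}. Lemma~\ref{le:auxsubst} is only really needed if the barb has to be transported across the substitution. Otherwise I would apply barb preservation of $\bsim$ directly to $P\sigma \mid \mathbf{0}$.

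\textbf{$\tau$-steps for prefixes.} A $\tau$-step of $x(y).(P\sigma) \mid R$ is one of the following. If it is internal to $R$, then $R\xrightarrow{\tau} R'$, and the result is again in $\mathcal{R}_1$; the $Q$-side matches with the same step and the reflexivity of $\bsim$ on both sides. If it is a communication with an output $\overline{x}w$ of $R$ (rules S-Com, S-Close), the result is $P\sigma\{w/y\} \mid R'$, or $\nu z(P\sigma\{z/y\} \mid R')$ in the close case. The $Q$-side performs the same communication. The results are related by $\bsim$ via $P\bpree Q$ instantiated with the substitution $\sigma\{w/y\}$ and the context $R'$, after restriction in the close case (Lemma~\ref{le:barbsim:two}). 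In the up-to format we set $P''=$ the $P$-result and $Q''=$ the $Q$-result, and require $P'' \mathrel{\mathcal{R}} Q''$. Here the up-to relation is being used somewhat degenerately, since the pair is already in $\bsim$. So I would either weaken the corecursion to allow returning $\bsim$ directly, or note that $\bsim \subseteq$ \texttt{BarbSimUpTo} by coinduction. $\mathcal{R}_2$ is symmetric, with input from $R$ and substitution $\sigma$ and the received name.

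\textbf{Replication (main obstacle).} For $\mathcal{R}_3$, $\tau$-steps can involve one copy of $P\sigma$ (S-Rep), two copies communicating (S-Rep-Com/Close), or a copy communicating with $R$. First I use Lemma~\ref{lts:two} to rewrite any $\tau$-step of $R \mid !(P\sigma)$ (modulo commutativity, Lemma~\ref{strength}) as a step of $R\mid (P\sigma \mid P\sigma)$ to some $T$, with result $\equiv T \mid !(P\sigma)$. Applying $P\bpree Q$ twice, with contexts $R\mid P\sigma$ and $R \mid Q\sigma$, gives $R\mid P\sigma\mid P\sigma \bsim R\mid Q\sigma\mid Q\sigma$, modulo $\equiv$ (which is included in $\bsim$). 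A matching step of the latter yields a result $T'$ with $T \bsim T'$.

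The difficulty is to rejoin $!(P\sigma)$: I need $T\mid !(P\sigma)$ related to $T' \mid !(Q\sigma)$. The up-to format allows $T\mid !(P\sigma) \bsim T' \mid !(P\sigma)$ on the left only if $\bsim$ is preserved by parallel composition, which it is not. So instead I would strengthen the relation to $\{(!(P\sigma)\mid R,\ !(Q\sigma)\mid R')\}$ where $R \bpree R'$-like conditions hold. More concretely, I would first move the transition of $Q\sigma\mid Q\sigma \mid R$ from $P$'s side by applying the $\bpree$ hypothesis with $R$ absorbing the residual $T$-part, so that the final pair again has the form $(!(P\sigma)\mid R'', !(Q\sigma)\mid R'')$ up to $\bsim$ on both ends. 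Getting this bookkeeping (which residual plays the role of the shared $R$) right, in Beluga with its fixed contexts, is where I expect most of the effort.
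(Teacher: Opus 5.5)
Your handling of $\mathcal{R}_1$, $\mathcal{R}_2$ and of the barbs of $\mathcal{R}_3$ is fine. For $\mathcal{R}_1$ and $\mathcal{R}_2$, the paper takes $\mathcal{R}_i\cup\bsim$ as a plain simulation, which is your first remedy. For the barbs of $\mathcal{R}_3$, going through $P\sigma\mid\mathbf{0}\bsim Q\sigma\mid\mathbf{0}$ is a valid alternative to the paper's use of Lemma~\ref{le:auxsubst}.

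The gap is the $\tau$-case for $\mathcal{R}_3$, which is the heart of the lemma and which you leave open. Your concrete route is a dead end, as you yourself observe. That route applies $P\bpree Q$ in the context $R\mid\cdot$, obtains $R\mid P\sigma\mid P\sigma\bsim R\mid Q\sigma\mid Q\sigma$, and then matches $T$ by some $T'$ with $T\bsim T'$. It fails because $\bsim$ is not closed under $\mid\,!(\cdot)$. Your fallback of strengthening $\mathcal{R}_3$ to pairs $(!(P\sigma)\mid R,\ !(Q\sigma)\mid R')$ with $R,R'$ suitably related does not close either. After a step you would again only know $T\bsim T'$, which is not preserved by contexts. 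No strengthening is needed.

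The missing idea is to keep $!(Q\sigma)$ inside the context when you invoke $P\bpree Q$. You then match the step from the $Q$-replicated process, not from $R\mid Q\sigma\mid Q\sigma$.

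From $!(P\sigma)\mid R\xrightarrow{\tau}S_1$ you get, by Lemmas~\ref{strength} and~\ref{lts:two}, a step $R\mid P\sigma\mid P\sigma\xrightarrow{\tau}T$ with $S_1\equiv\, !(P\sigma)\mid T$. Now apply $P\bpree Q$ twice, with substitution $\sigma$ and the contexts $P\sigma\mid R\mid !(Q\sigma)$ and $Q\sigma\mid R\mid !(Q\sigma)$. Combined with Rep-Unfold twice and $\equiv\,\subseteq\,\bsim$, this gives
\[
!(Q\sigma)\mid (R\mid P\sigma\mid P\sigma)\ \bsim\ !(Q\sigma)\mid (R\mid Q\sigma\mid Q\sigma)\ \equiv\ !(Q\sigma)\mid R .
\]
The left-hand side has the step $\xrightarrow{\tau}\, !(Q\sigma)\mid T$ by a parallel rule. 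So there is $S_2$ with $!(Q\sigma)\mid R\xrightarrow{\tau}S_2$ and $!(Q\sigma)\mid T\bsim S_2$. The up-to clause is then met by the coinductive hypothesis instantiated at the shared process $T$:
\[
S_1\ \bsim\ !(P\sigma)\mid T\ \ \mathcal{R}_3\ \ !(Q\sigma)\mid T\ \bsim\ S_2 .
\]
This is the ``long chain of $\equiv$ and $\bsim$'' in the paper's proof. It works because $\mathcal{R}_3$ already quantifies over an arbitrary $R$, which absorbs the residual $T$ together with whatever $T$ inherited from the two $P\sigma$ copies.
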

\begin{proof}
  We prove that $\mathcal{R}_1 \cup \bsim$ and $\mathcal{R}_2 \cup \bsim$ are strong barbed simulations, while $\mathcal{R}_3$ is a strong barbed simulation up to $\bsim$. The proofs, both as pen-and-paper and in Beluga, are carried out by coinduction, considering each possible observation and constructing the corresponding object required by the definition.
    \begin{enumerate}
        \item Checking that the elements of $\bsim$ satisfy all the observations is immediate, hence we focus on the elements of $\mathcal{R}_1$.
        Proving that $\mathcal{R}_1 \cup \bsim$ is barb preserving is also immediate, by looking at the structure of the pairs of processes in $\mathcal{R}_1$. To show that it is a reduction simulation, we proceed by inversion on the internal transition $s$ that needs to be matched; in particular, we illustrate the case in which $s$ has been derived from the {\small \textsc{S-COM-R}} rule, since it highlights one of the properties of substitutions that Beluga directly supports.
        
The transition $s$ has the form $x(z).P\sigma \mid R \xrightarrow{\tau} (P\sigma)\{y/z\} \mid R'$
        for some $y,R'$; similarly, we can see that
        $x(z).Q\sigma \mid R \xrightarrow{\tau} (Q\sigma)\{y/z\} \mid R'$. 
        As the composition of substitutions is a substitution, we conclude by observing that
        $ ((P\sigma)\{y/z\} \mid R', \ (Q\sigma)\{y/z\} \mid R') = (P (\sigma \circ \{y/z\}) \mid R', \ Q (\sigma \circ \{y/z\}) \mid R') \in \mathcal{R}_1$.
        \item Analogous to the previous case.
        \item Proving that $\mathcal{R}_3$ is barb preserving requires showing that a transition $s: \ !(P \sigma) \xrightarrow{\alpha} P'$, with $\alpha \neq \tau$, is matched by $!(Q \sigma)$. After inversion on $s$ through the {\small \textsc{S-REP}} rule, the desired transition is built after applying Lemma~\ref{le:auxsubst}.
        
        Conversely, proving the reduction closure up to $\bsim$ of $\mathcal{R}_3$
        does not require any explicit inversion; rather, it is enough to apply Lemma~\ref{lts:two} and the coinductive hypothesis to build a long chain of structural congruences and strong barbed similarities.
    \end{enumerate}
\end{proof}

\begin{lemma}\label{le:isprecong}
    $\bpree$ is a precongruence.~\clink{9\_context\_lemma.bel}{198}{220}
\end{lemma}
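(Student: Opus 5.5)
We need to show that $\bpree$ is a preorder and is preserved by every process context. Rather than reasoning about arbitrary contexts directly, we show compatibility: closure under each constructor (input prefix, output prefix, parallel composition, restriction, replication). Closure under contexts then follows by induction on the derivation of the context relation {\ttfamily \small \color{belugagreen}PCtx}, one case per constructor.

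\emph{Preorder.} Reflexivity is immediate, since $P\sigma \mid R \bsim P\sigma \mid R$ by reflexivity of $\bsim$ (Lemma~\ref{le:barbsim}). For transitivity, suppose $P \bpree Q$ and $Q \bpree S$. Instantiating both hypotheses with the same $\sigma$ and $R$ and using transitivity of $\bsim$ gives $P \bpree S$.

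\emph{Compatibility.} Assume $P \bpree Q$, and fix a substitution $\sigma$ and a process $R$. We treat each constructor in turn.
\begin{itemize}
\item \emph{Input prefix.} We have $(x(y).P)\sigma \mid R = x\sigma(y).(P\sigma') \mid R$, where $\sigma'$ is the extension of $\sigma$ that maps $y$ to itself. In Beluga this is the weakened substitution \texttt{\$S[..],y}, so the equation is definitional. The pair is then in $\mathcal{R}_1$, and we conclude by Lemma~\ref{le:auxbarbsim}(1).
\item \emph{Output prefix.} Symmetrically, the pair is in $\mathcal{R}_2$, and Lemma~\ref{le:auxbarbsim}(2) applies.
\item \emph{Replication.} The pair is in $\mathcal{R}_3$, and Lemma~\ref{le:auxbarbsim}(3) applies.
\item \emph{Parallel composition.} Write $(P \mid S)\sigma \mid R = P\sigma \mid (S\sigma \mid R)$ up to structural congruence (associativity). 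Apply the hypothesis with context process $S\sigma \mid R$, and transport the result along structural congruence using the inclusion $\equiv\ \subseteq \bsim$ and transitivity of $\bsim$ (Lemma~\ref{le:barbsim}). The case $S \mid P$ is reduced to this one by commutativity.
\item \emph{Restriction.} Use scope extension: $((\nu z)P)\sigma \mid R \equiv (\nu z)(P\sigma' \mid R)$, where $\sigma'$ extends $\sigma$ with $z \mapsto z$ and $R$ is weakened so that $z$ is fresh for it. The hypothesis instantiated at $\sigma'$ and the weakened $R$ gives $P\sigma' \mid R \bsim Q\sigma' \mid R$. Lemma~\ref{le:barbsim:two} then closes under $\nu z$, and we transport back along $\equiv$.
\end{itemize}

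The step I expect to be the main obstacle is the restriction case, together with the context bookkeeping in the input case. Beluga must accept $P[\$S]$ under a binder as $P$ applied to the extended substitution. Moreover, $R$ lives in context \texttt{h}, so it must be weakened to \texttt{h,z} in order to apply the hypothesis at the extended context; this relies on $\bpree$ quantifying over all target contexts \texttt{h}. If the direct scope-extension chain proves awkward, I would instead show that $\{((\nu z)(P\sigma')\mid R,\ (\nu z)(Q\sigma')\mid R)\}$ is a simulation up to $\bsim$, mirroring the treatment of $\mathcal{R}_3$.
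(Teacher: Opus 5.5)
Your proposal is correct and follows the paper's proof: induction on the context derivation, with the input, output and replication cases discharged by Lemma~\ref{le:auxbarbsim}, and the parallel-composition and restriction cases by chains of structural congruence and strong barbed similarity (restriction additionally uses Lemma~\ref{le:barbsim:two} after scope extension). Your explicit preorder argument, and your split into a compatibility step followed by the \texttt{PCtx} induction, only repackage the same argument.
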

\begin{proof}
    By induction on the structure of the given context. The prefixes and replication cases make use of Lemma~\ref{le:auxbarbsim}, while the others rely on chains of $\equiv$ and $\bsim$.
\end{proof}

\begin{lemma}
    $\bpree$ is included in the largest precongruence included in strong barbed similarity.~\clink{9\_context\_lemma.bel}{223}{226}
\end{lemma}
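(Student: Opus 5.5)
The largest precongruence included in $\bsim$ is formalized as the coinductive type {\ttfamily \small \color{belugagreen}LargPreCong}, which has two observations: membership in $\bsim$ (\texttt{LP\_BarbSim}) and closure under any context given by a {\ttfamily \small \color{belugagreen}PCtx} derivation (\texttt{LP\_PreCong}). I would therefore prove the statement by coinduction. Concretely, I would write a corecursive function that takes a proof of $P \bpree Q$ (an object of type {\ttfamily \small \color{belugagreen}BarbPre'}) and produces an object of type {\ttfamily \small \color{belugagreen}LargPreCong} for $P$ and $Q$. The function is defined by copattern matching on the two observations.

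For the \texttt{LP\_BarbSim} observation, I apply Lemma~\ref{le:inclbarbsim} directly: from $P \bpree Q$ it yields $P \bsim Q$. For the \texttt{LP\_PreCong} observation, I receive a context derivation $((P,C_P),(Q,C_Q))$ and must produce {\ttfamily \small \color{belugagreen}LargPreCong} for $C_P$ and $C_Q$. I use Lemma~\ref{le:isprecong} to obtain $C_P \bpree C_Q$ from $P \bpree Q$ and the context derivation. I then make the corecursive call on this new $\bpree$ proof. This call is guarded by the \texttt{LP\_PreCong} observation, so productivity holds; since Beluga has no productivity checker, I would verify this by hand.

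On paper the argument is even simpler. By Lemmas~\ref{le:inclbarbsim} and~\ref{le:isprecong}, $\bpree$ is a precongruence contained in $\bsim$, so it lies below the greatest fixed point that {\ttfamily \small \color{belugagreen}LargPreCong} describes. In other words, $\bpree$ is itself a post-fixed point of the operator whose greatest fixed point defines the largest precongruence in $\bsim$.

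The main obstacle I expect is the LF-context bookkeeping in the second observation. The context derivation moves from $g'$ to a possibly larger $g$, so Lemma~\ref{le:isprecong} must be stated with exactly this {\ttfamily \small \color{belugagreen}PCtx} signature. The resulting {\ttfamily \small \color{belugagreen}BarbPre'} object must also live in $g$, with its substitutions $\$S : \$[h \vdash g]$ quantified over the new context. If the two signatures do not line up, I would first add a small wrapper lemma that converts between them, and only then make the corecursive call.
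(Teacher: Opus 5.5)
Your proposal is correct and is essentially the paper's proof: a corecursive function from \texttt{BarbPre'} to \texttt{LargPreCong}. It answers \texttt{LP\_BarbSim} with Lemma~\ref{le:inclbarbsim}, and \texttt{LP\_PreCong} with Lemma~\ref{le:isprecong} followed by a corecursive call guarded by that observation. One small slip in your obstacle discussion: in a \texttt{PCtx} derivation the context $g$ of $C_P, C_Q$ is a \emph{prefix} of the context $g'$ of $P, Q$, because binders of the context capture names of $P$; so $g$ shrinks rather than grows, and this does not affect the argument.
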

\begin{proof}
    Follows immediately from Lemmas~\ref{le:inclbarbsim} and~\ref{le:isprecong}.
\end{proof}

\begin{theorem}[Context Lemma]
    $\bpree$ is included in strong barbed precongruence.~\clink{9\_context\_lemma.bel}{229}{231}
\end{theorem}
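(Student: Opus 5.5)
The theorem follows by composing the preceding lemma with Lemma~\ref{lem:larg_precong}. The plan is a short two-step argument. First, take $P \bpree Q$. By the preceding lemma (itself a consequence of Lemmas~\ref{le:inclbarbsim} and~\ref{le:isprecong}), the pair $(P,Q)$ lies in the largest precongruence included in strong barbed similarity. In the Beluga encoding this is an inhabitant of the coinductive type \texttt{LargPreCong}. Second, by Lemma~\ref{lem:larg_precong}, strong barbed precongruence $\bpre$ coincides with that largest precongruence. So I only need the direction from \texttt{LargPreCong} to \texttt{BarbPre}, and composing the two gives $P \bpre Q$.

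The step that needs care is unfolding this direction concretely, in case the available lemma is stated only in the opposite direction. To build \texttt{BarbPre} $[g'\vdash P]\ [g'\vdash Q]$ via its single constructor \texttt{BC}, I must provide a function that, for every context $g$ and every \texttt{PCtx} $[g'\vdash P]\ [g\vdash C_P]\ [g'\vdash Q]\ [g\vdash C_Q]$, returns \texttt{BarbSim} $[g\vdash C_P]\ [g\vdash C_Q]$. Given $\ell$ : \texttt{LargPreCong} $[g'\vdash P]\ [g'\vdash Q]$, I would first apply the observation \texttt{LP\_PreCong} to $\ell$ and the context derivation, obtaining a \texttt{LargPreCong} object on $(C_P, C_Q)$. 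I would then apply the observation \texttt{LP\_BarbSim} to that object, which yields the required \texttt{BarbSim}.

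This direction involves no corecursion, only two observations, so no productivity issue arises. All the genuine coinductive work has already been done in Lemma~\ref{le:auxbarbsim}. The only real obstacle I expect is bookkeeping with Beluga contexts. The relation $\bpree$ (\texttt{BarbPre'}) quantifies over substitutions $\$S : \$[h \vdash g]$, while \texttt{PCtx} relates processes in $g'$ to processes in an extended context $g$. I must make sure the instance of the preceding lemma is invoked at the LF context $g'$ of $P$ and $Q$, so that its output type matches the premise expected by \texttt{BC} exactly. If needed, I would make the implicit context arguments explicit to let the typechecker unify them.
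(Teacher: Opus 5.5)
Your proposal is correct and matches the paper's proof. The paper also composes the preceding lemma ($\bpree$ is included in the largest precongruence included in strong barbed similarity) with Lemma~\ref{lem:larg_precong}, which identifies that precongruence with $\bpre$. Your unfolding of the needed direction, applying \texttt{LP\_PreCong} to the \texttt{PCtx} derivation and then \texttt{LP\_BarbSim}, is a faithful concrete rendering of that step and requires no corecursion.
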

\begin{proof}
    Follows from Lemma~\ref{lem:larg_precong}, characterizing barbed precongruence as the largest precongruence included in strong barbed similarity.
\end{proof}

\section{Evaluation}\label{sec:eval}

The entire development comprises approximately 1500 lines of code,
organized into nine files. It includes 23 definitions --- three of
which are coinductive --- accounting for roughly $10\%$ of the total
codebase, as well as 53 theorems. Most of these theorems exactly correspond to
the relevant results presented in~\cite{SWBook}; however, several of the more
intricate ones must be decomposed into multiple statements. This is
particularly necessary when they involve combinations of induction and
coinduction with differing proof goals. Notably, the only
strengthening lemmas required in the Beluga formalization are those
already needed at the paper level to satisfy the side conditions of
the LTS rules concerning free and bound names. This once again
underscores the advantages of using HOAS for handling binders.

A central feature of the $\pi$-calculus is its ability to pass names, enabling subprocesses to communicate channel names and subsequently use them. Concretely, input prefixes bind received names, and in all standard semantics there is a transition in which the binder is eliminated and the bound name is replaced by a fresh free name representing the received value. In Beluga, such single substitutions are treated as a special case of simultaneous substitutions, which allows them to be composed seamlessly, as demonstrated in Lemma~\ref{le:auxbarbsim}. 
This is a net gain compared to other approaches (such as~\cite{Bengtson2009}), where the two notions need to be implemented from scratch.

One complication arising from substitutions in Beluga is their
interaction with the coverage checker. Verifying coverage often
requires solving non-trivial unification problems, which can become
more challenging in the presence of substitutions: this is well understood in the theory, but simply not supported yet by the Beluga implementation. Consequently,
Lemma~\ref{le:auxsubst} is formalized without a totality check.
In fairness, we find more urgent the 
  implementation of the  \emph{productivity} checker: for what it is worth, we have
  manually checked that all coinductive calls comply with
  productivity, as we understand it, but this falls somewhat short of a complete assurance
  of the correctness of our mechanization.


In the CCFB paper, the authors stated:

\begin{quote}
\emph{The crux of our challenge is the effective use of coinductive up-to
  techniques.  The intention is that the result should be relatively
  easy to achieve once the main properties of bisimilarity are
  established.
}\end{quote}


However, coinductive up-to techniques, which have a prominent role in
the study of other bisimilarities, are only used here once, i.e.\ to
prove the context lemma, and the formalization of strong barbed
similarity up to $\bsim$ to does not diverge significantly from the
formalization of strong barbed similarity. We conclude that the design of the challenge does not exercise significantly the up-to aspect.


\section{Related Work}
\label{sec:rel}


In this section we mostly concentrate on mechanizations of behavioral equivalence in process calculi.

In~\cite{Bengtson2009,BengtsonParrow2007Completeness}, Bengtson and
Parrow provide the largest formalization of the $\pi$-calculus in the
literature using nominal logic in Isabelle/HOL to represent binders.  The
encoding of the operational semantics uses
\emph{commitments}~\cite{Milner1993} rather than labelled transitions, i.e.\
pairs of an action and a derivative process that binds the bound names
of the action in the derivative. The treatment of replication
follows the single-rule presentation of~\cite{DBLP:journals/tcs/HonsellMS01}.  Their results include
proving that strong and weak bisimulations, both early and late, are
congruences, and that structural congruence is a
bisimulation; 
the authors also provide an
axiomatization of strong late bisimulation and prove that it is sound
and complete.
Coinductive definitions and proofs are encoded by using Isabelle/HOL's
standard coinductive package. Since most of these details are handled
by Isabelle, the main focus of Bengtson and Parrow is on the nominal
logic aspects of the formalization, which, while elegantly supporting
free and bound names, still entails a fair amount of work to establish
basic infrastructural properties.

Ambal et al.~\cite{AmbalLS21} provide a Coq formalization of the
higher-order $\pi$-calculus that systematically studies binder
representations for mixed binding (process vs name binders) and
establishes strong context bisimilarity as a congruence via Howe’s
method.  Coinduction is used in the standard way to characterize
bisimilarity; coinductive ``guarded'' proofs only appear in the final
step showing that the Howe closure is included in
bisimilarity. A direct Coq development in the style of ours would likely require libraries such as PACO~\cite{Hur2013}. Further, the HOAS approach encodes process and name binders in the same way, offering significant simplifications.
In~\cite{TianS20}, Tian and Sangiorgi study proof methods for bisimulation in CCS, focusing on contractions rather than equations. They formalize the congruence induced by weak bisimilarity (rooted bisimilarity) in HOL4 using the coinductive package \texttt{Hol\_coreln}.
Their development does not treat processes up to $\alpha$-conversion, and restriction is not a binder; contexts are therefore represented as single-binder $\lambda$-expressions applied to processes.  As in our work, the coarsest congruence contained in bisimilarity is characterized via closure under composition, although here it is ancillary to the main result on unique solutions of rooted contractions.

Tiu and Miller~\cite{DBLP:journals/tocl/TiuM10} give a proof-search
specification of the finite $\pi$-calculus in a logic equipped with
definitions, fixed points, and the $\nabla$ quantifier for generic
judgments. Their encoding is equivalent to ours w.r.t.\ syntax and
LTS. It does seem more general insofar that the interaction between
the $\forall$, $\exists$, and $\nabla$ quantifiers explains the usual
distinctions between early, late, and, remarkably, open
bisimulation. For the latter  the $\nabla$ quantifier plays a
role analogous to Sangiorgi's
\emph{distinctions}. 

Veltri and Vezzosi~\cite{VeltriV23} give a much more extensive
formalization of process-calculus semantics in Guarded Cubical Agda:
they develop fully abstract denotational models for CCS and the early
$\pi$-calculus, showing that denotational equality coincides with
bisimilarity. Their encoding uses well-scoped de Bruijn syntax, in
contrast with our HOAS representation in Beluga. Their approach to
coinduction is also quite different: guarded recursion, clocks, and
cubical path equality provide the semantic infrastructure through
which bisimilarity is recovered as equality, whereas we reason
directly with coinductive operational relations.

\section{Conclusions and Future Work}\label{sec:conc}

What happens if  a challenge turns out to be not too challenging? Is it the designers' fault or the encoders' merit? 
In our previous paper we mused on how uneventful the 
formalization had been. We have to repeat ourselves: the HOAS encoding
of the syntax and semantics of the calculus, which we have inherited
from a long tradition (dating back to~\cite{MillerPI}), abstracts all
the issues related to free and bound names that have preoccupied all 
first-order encodings. What the present paper brings to the table is a
very intensive use of coinductive reasoning: here, Beluga's copattern
approach shines, much more than in a previous coinductive case study
(Howe's method, see~\cite{MomiglianoPT19}, which turns out to
exercise coinduction only in a limited fashion).

Beluga's underlying sized types discipline is intrinsically
compositional, making nesting coinductive proofs completely
unproblematic. This is in sharp contrast with
the guarded recursion approach native to Rocq, which would have been
simply unfeasible and compelled the user to switch to dedicated
libraries such as PACO~\cite{Hur2013}. Again, it is the support for
both HOAS and coinduction via observation that makes, in our judgment,
the development so smooth --- Agda supports a similar approach to coinduction, but not HOAS; Abella offers the converse trade-off.

\paragraph{Future Work.}

There are several directions in which the present formalization could be  extended.
A natural first step is to establish the converse direction of the
context
lemma. 
The proof presented  in~\cite{SWBook}
proceeds 
by constructing a suitable context such that a pair of strongly barbed
congruent processes can be composed in parallel under substitution so as to yield
strongly barbed bisimilar processes. The argument relies on 
multi-step internal transitions whose length depends on the size of the
support of the substitution. 
It is not clear to us whether Beluga's theory of substitution 
directly captures this style of argument, particularly when it involves
a fine-grained analysis of substitutions and their supports. Moreover,
the proof makes essential use of the symmetry of strong barbed
bisimilarity, so we would have to carry out the symmetric proofs after all.


\emph{Open} bisimulation is particularly appealing in our setting
because Beluga’s contextual objects and simultaneous substitutions
seem to provide much of the infrastructure needed to express
substitution-closed behavioral relations. The main additional burden
would be the treatment of distinctions/freshness constraints and their
interaction with the LTS. Since Beluga does not have a $\nabla$
quantifier, one must prove that the resulting substitution-based
relation coincides with the standard distinction-indexed definition,
or else identify precisely which variant of open similarity has been
mechanized. One possibility to avoid an explicit 
representation of distinctions could be to look at the theory of \emph{quasi-open} bisimilarity~\cite{Fu05,HorneALT18}.

Another technically involved extension concerns the coincidence
between strong barbed congruence and early congruence. The standard
proof requires the introduction of a stratification of strong early
bisimilarity, with the key inclusion shown via a contrapositive
argument. This reliance on classical reasoning renders a direct
mechanization in Beluga  particularly
challenging.


Finally, Lassen in~\cite{Lassen98} studies contextual closure as the
greatest \emph{adequate} congruence in the context of program
equivalence for a functional language. Since adequacy has a barbed
``flavor'', the question is whether analogous characterizations can
be established for bisimilarity-based congruences in the
$\pi$-calculus.

\section*{Acknowledgments}
\begin{sloppypar}
  This work is partially supported by the National Science Foundation under \href{https://www.nsf.gov/awardsearch/showAward?AWD_ID=2242786}{Grant No.\ 2242786 (SHF:Small:Concurrency In Reversible Computations)}.	We thank Brigitte Pientka for several helpful conversations.
\end{sloppypar}
\smallskip
\noindent
\emph{AI usage statement}. The formalization, definitions, and proofs presented in this paper are entirely the authors’ work. Generative AI tools were used solely for editorial assistance (improving clarity and English wording), and had no role in the technical development or validation of the results.
	
\bibliographystyle{eptcs}
\bibliography{bib/merged}

\appendix

\end{document}